\documentclass[11pt,a4paper,english]{article}

\bibliographystyle{plainurl}
\usepackage[margin=1.1in]{geometry}

\usepackage{microtype} 

\usepackage{graphicx}
\graphicspath{{./figures/}}

\usepackage{xspace}
\usepackage[utf8]{inputenc}
\usepackage{amssymb,amsfonts,amsmath,amsthm}
\usepackage[inline]{enumitem}
\usepackage{mathtools}
\usepackage[hidelinks]{hyperref}
\usepackage{csquotes}
\usepackage{thmtools, thm-restate}
\usepackage{xcolor}

\newtheorem{theorem}{Theorem}
\newtheorem{corollary}[theorem]{Corollary}
\newtheorem{lemma}[theorem]{Lemma}

\newtheorem{question}[theorem]{Question}

\usepackage[nameinlink,capitalise]{cleveref}
\Crefname{figure}{Figure}{Figures} 

\def\RR{{\mathbb{R}}}

\let\geq\geqslant
\let\leq\leqslant
\let\ge\geqslant
\let\le\leqslant

\title{Connected Matchings\thanks{A preliminary version of this work was presented at 
the 40th European Workshop on Computational Geometry (EuroCG2024).}}

\author{%
Oswin~Aichholzer\thanks{Institute of Algorithms and Theory, Graz University of Technology, Graz, Austria. \texttt{oswin.aichholzer@tugraz.at}.}
\and 
Sergio~Cabello\thanks{Faculty of Mathematics and Physics, University of Ljubljana, Slovenia, 
and Institute of Mathematics, Physics and Mechanics, Slovenia.
\texttt{sergio.cabello@fmf.uni-lj.si}.}
\and 
Viola~M{\'e}sz{\'a}ros\thanks{Bolyai Institute, University of Szeged, Szeged, Hungary. \texttt{meszaros.viola@gmail.com}} 
\and
Patrick~Schnider\thanks{Department of Computer Science, ETH Z\"urich, Z\"urich, Switzerland. \texttt{patrick.schnider@inf.ethz.ch}.}
\and 
Jan~Soukup\thanks{Faculty of Mathematics and Physics, Charles University, Prague, Czech Republic. \texttt{soukup@kam.mff.cuni.cz}} 
}


\begin{document}
\maketitle
	
\begin{abstract}
	We show that each set of $n\ge 2$ points in the plane in general position has 
	a straight-line matching with at least $(5n+1)/27$ edges whose segments form a connected set,
	and such a matching can be computed in $O(n \log n)$ time.
	As an upper bound, we show that for some planar point sets in general position
	the largest matching whose segments form a connected set has $\lceil \frac{n-1}{3}\rceil$ edges. 
	We also consider a colored version, where each edge of the matching
	should connect points with different colors.

	\medskip
	\textbf{Keywords:} point sets; matchings for point sets; intersection graph
\end{abstract}

\section{Introduction}
\label{sec:intro}

Consider a set $P$ of $n$ points in the plane in general position, meaning that no three points of $P$ are collinear. A (straight line) \emph{matching} $M$ for $P$ is a set of segments with endpoints in $P$ such that no two segments share an endpoint. A matching $M$ for $P$ is \emph{connected} (via their crossings) if the union of the segments of $M$ forms a connected set. Equivalently, a matching is connected when the intersection graph of its segments is connected. The \emph{size} of the matching $M$ is the number of segments (or edges) in $M$. Note that whenever $P$ has a connected matching of size $m\ge 1$, it also has a connected matching of size $m-1$. Indeed, this is easy to see using the formulation via  intersection graphs: in a connected graph, which is the intersection graph of the $m$ segments of a matching $M$, we can always remove a vertex (which is an edge in $M$), and keep the graph connected.

In this paper, we study the following problem.

\begin{question}[Connected Matching] \label{que:main}
	Find for each $n$ the largest value~$f(n)$ with the following property: 
	each set of $n$ points in general position in the plane 
	has a connected matching of size~$f(n)$.
\end{question}

It is also natural to consider a colored version of the problem. In this setting, the points are colored and each edge of the matching has to connect points with different colors. A \emph{balanced $c$-coloring} of $P$ is a partition of $P$ into $c$ subsets $P_1,\dots, P_c$ such that $|P_i|$ and $|P_j|$ differ by at most one, for each $1\le i,j\le c$. In particular, if $n$ is divisible by $c$, each set $P_i$ has cardinality $n/c$. A matching for a balanced $c$-coloring $P_1,\dots, P_c$ is \emph{polychromatic} if each segment connects two points with different colors.
The bichromatic version of the problem corresponds to $c=2$.
We are also interested in the following question.
	
\begin{question}[Colored Connected Matching] 
	Find for each $c\le n$ the largest value~$g(n,c)$ with the following property:
	each set of $n$ points in general position in the plane with a balanced $c$-coloring  
	has a connected polychromatic matching of size~$g(n,c)$.
\end{question}
	
The setting with $c=n$ colors corresponds to the uncolored version because all edges are allowed in the matching. Therefore, $f(n)=g(n,n)$.

In this work we provide upper and lower bounds for the functions $f(n)$
and $g(n,c)$. 
We show that $\frac{5n+1}{27}\le f(n)$ and a connected matching of this size can
be computed in $O(n \log n)$ time. We also show that $f(n)\le \lceil\frac{n-1}{3}\rceil$.
For the function $g(n,c)$, we provide an upper bound only in the bichromatic setting,
namely $g(n,2)\le \lceil\frac{n-1}{4}\rceil$.
We also show that, for sufficiently large $n$,
\[
	g(n,c)\ge \begin{cases}
		\displaystyle\frac{c-3}{6c}n-\frac{1}{2} & \text{for $c> 7$,}\\[.3cm]
		\displaystyle\frac{c-1}{9c}n-\frac{1}{3} & \text{ for $2\le c\le 7$.}
	\end{cases}
\]
For the bichromatic case, $c=2$, this bound gives $g(n,2)\ge \frac{n}{18}-\frac{1}{3}$.
When $c$ is very large, the lower bound becomes $\frac{n}{6}-\Theta(1)$.
Again, connected polychromatic matchings attaining this size can
be computed efficiently, namely in linear time.

The problem can be seen as a relaxation of the problem of \emph{crossing families}
of Aronov et al.~\cite{AronovEGKKPS94}, where one wants to find as many segments
as possible with endpoints in $P$ such that any pair of segments crosses 
in their interior. While in our setting we are asking
for a connected subgraph in the intersection graph of the segments, 
the crossing families problem asks that the intersection graph is
a complete graph. 
The best lower bound, showing an almost linear lower bound for crossing families,
has been a recent breakthrough by Pach, Rubin and Tardos~\cite{PachRT21}. 
Aichholzer et al.~\cite{AichholzerKSVV22} have the currently best upper bound.

The rest of the paper is organized as follows.
In \cref{sec:algo_tools} we provide some basic subroutines that will be used
in our algorithms.
In \cref{sec:separation} we discuss the existence and computation of a separator
for points in the plane; the existence of such a separator is discussed by \'Abrego and 
Fern\'andez-Merchant~\cite{AbregoF17}.
In \cref{sec:upper} we provide upper bounds for $f(n)$ and $g(n,c)$.
In \cref{sec:mono} we present a lower bound for $f(n)$  
and in \cref{sec:deep} we show that $f(n)$ is at least as large as the depth of the ``most interior'' point of a set (see there for a formal definition of the depth of a point).
In \cref{sec:colored} we give lower bounds for $g(n,c)$, the colored setting.
We finalize with a short discussion in \cref{sec:discussion}.

\section{Algorithmic tools}
\label{sec:algo_tools}

Our algorithms are based on subroutines using classical techniques. 
We quickly explain these subroutines here.

We will employ algorithms for the 
$k$-selection problem: given $n$ numbers and a value $k\le n$, 
compute the element that would be in the $k$th position, 
if the numbers would be sorted non-decreasingly.
It is well known that the $k$-selection problem can be solved performing 
a linear number of steps and comparisons between the input numbers;
input numbers are only compared, and no arithmetic operations with them are performed.
See Blum et al.~\cite{BlumFPRT73} or the textbook~\cite[Section 9.3]{CLRS_book}
for a description of the algorithm. Randomized variants are 
simpler~\cite[Section 9.2]{CLRS_book}.

We also use that a linear program with $2$ variables and $n$ constraints
can be solved optimally in $O(n)$ time; see Megiddo~\cite{Megiddo83a} for
a deterministic algorithm or the textbook~\cite[Chapter 4]{BergCKO08} for 
a simpler, randomized algorithm. Our use of linear programming is encoded 
in the following result. We use $CH(P)$ to denote the convex hull of $P$.

\begin{lemma}\label{lem:CH}
	Given a set $P$ of points in the plane and a ray $\rho$ 
	that intersects $CH(P)$, we can find in linear time the last intersection
	of $\rho$ with the boundary of $CH(P)$.
\end{lemma}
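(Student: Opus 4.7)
The plan is to reduce the task in linear time to a linear program with two variables and $n$ constraints, which is then solved in $O(n)$ time by Megiddo's algorithm~\cite{Megiddo83a}. I first apply a rigid motion to the plane so that $\rho$ becomes the positive $x$-axis with endpoint at the origin; transforming the $n$ points takes $O(n)$ time, and the inverse motion will be applied to the output. Writing $p_i=(x_i,y_i)$ in the new coordinates, the last intersection of $\rho$ with $\partial CH(P)$ is the point $(b^*,0)$, where
\[
b^* \;:=\; \max\{\, x : (x,0)\in CH(P)\,\};
\]
the hypothesis that $\rho$ meets $CH(P)$ guarantees that this maximum exists and is nonnegative. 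I claim that $b^*$ equals the optimum of the two-variable LP that minimizes $b$ subject to $b + m\,y_i \ge x_i$ for $i=1,\dots,n$, with real variables $b$ and $m$.

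For the ``$\le$'' direction of the claim, any feasible pair $(b,m)$ defines the closed half-plane $\{(u,v) : u\le b+m\,v\}$, which contains every $p_i$ and therefore all of $CH(P)$; evaluating at $v=0$ gives $b^*\le b$. For the ``$\ge$'' direction, I need a non-horizontal supporting line of $CH(P)$ through $(b^*,0)$, for writing such a line as $u = b^* + m\,v$ shows that $(b^*,m)$ is LP-feasible with objective value $b^*$. This last point is the step requiring the most care: if $(b^*,0)$ lies in the relative interior of an edge of $CH(P)$, general position forces that edge to connect a point with positive $y$-coordinate to a point with negative $y$-coordinate, so its supporting line is automatically non-horizontal; if $(b^*,0)$ is a vertex of $CH(P)$, the supporting cone at that vertex is a wedge of positive angular width and hence contains a non-horizontal direction.

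With the LP characterization established, a single call to Megiddo's algorithm returns $b^*$ (and a witnessing $m$) in $O(n)$ time, after which the last intersection point in the original coordinates is recovered by applying the inverse rigid motion to $(b^*,0)$. The total running time is linear, as claimed.
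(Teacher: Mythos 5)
Your overall route is the same as the paper's: apply a rigid motion, characterize the last intersection point as the optimum of a two-variable linear program over the lines not parallel to the ray, and solve it with Megiddo's algorithm in $O(n)$ time. The difference is that the paper first splits off the tangent case, in which all of $CH(P)$ lies weakly on one side of the line supporting $\rho$ (there the answer is simply the farthest point of $P$ along $\rho$, found by a linear scan), and only sets up the LP when there are points on both sides; you instead assert the LP characterization unconditionally, and that is exactly where your argument has a gap. First, you invoke general position, but \cref{lem:CH} makes no such assumption, and in the paper's own application of the lemma (inside \cref{thm:separatingpath}) the line supporting $\rho$ passes through two points of $P$, so ``no three collinear'' does not exclude points of $P$, or even a hull edge, lying on your $x$-axis. (That sub-case is in fact impossible for a different reason: a horizontal edge whose relative interior contains $(b^*,0)$ would lie on the axis and would contain axis points of $CH(P)$ with $x>b^*$, contradicting the maximality of $b^*$; no general position is needed.) Second, and more seriously, in the vertex case producing \emph{some} non-horizontal supporting line is not enough: LP-feasibility of $(b^*,m)$ requires that $CH(P)$ lie in the half-plane $\{(u,v): u\le b^*+mv\}$, and your wedge argument says nothing about this orientation. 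The failure occurs precisely in the tangent configurations the paper carves out: for $P=\{(0,0),(1,1),(3,2)\}$ and $\rho$ the positive $x$-axis, the last intersection is $(0,0)$, and the non-horizontal line $u=-v$ supports $CH(P)$ at $(0,0)$, yet $(b,m)=(0,-1)$ is infeasible because $(1,1)$ violates $u\le b+mv$. So the step ``such a line shows $(b^*,m)$ is LP-feasible'' does not follow from what you proved.

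The claim you need is true and has a clean uniform proof that also removes the general-position appeal: since $b^*$ is maximal, the ray direction $(1,0)$ does not lie in the tangent cone of $CH(P)$ at $(b^*,0)$ (otherwise some point $(b^*+\varepsilon,0)$ would lie in $CH(P)$), hence by cone separation the normal cone at $(b^*,0)$ contains a vector $w=(w_1,w_2)$ with $w_1>0$; dividing by $w_1$ yields a supporting line $u=b^*+mv$ with all of $P$ on the feasible side. This argument covers the vertex case, the edge case, degenerate hulls, and in particular the tangent case in one stroke. Alternatively, you can simply follow the paper: test in linear time whether all points lie weakly on one side of the line supporting $\rho$, return the farthest point of $P$ on $\rho$ in that case, and run the LP only otherwise (in that remaining case $CH(P)$ meets the ray's line in a segment of positive length, and then every non-horizontal supporting line at $(b^*,0)$ automatically has the correct orientation).
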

\begin{proof}
	\begin{figure}[htb]\centering
		\includegraphics[page=7,scale=1]{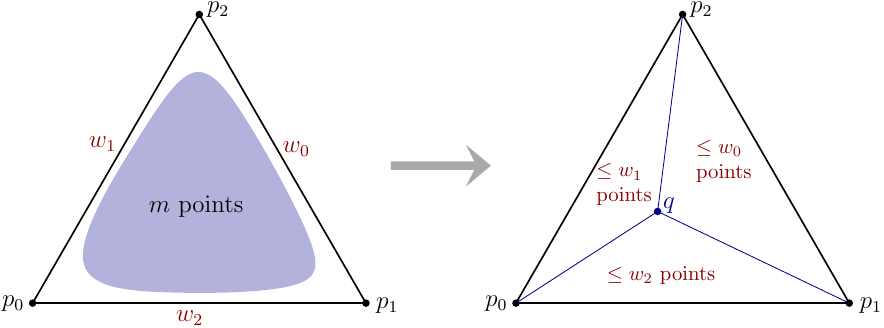}
		\caption{Proof of \cref{lem:CH}. The pair $(a,b)$ defining the blue
			line is a feasible solution to the linear program.}
		\label{fig:CH}
	\end{figure}
	Making a rigid motion, if needed, we may assume that $\rho$ is an upward 
	vertical ray that starts at the origin. 
	If all the points of $CH(P)$ are on the same side of the $y$-axis, 
	then $\rho$ is tangent to $CH(P)$ 
	and a point of $P$ is the last intersection of $\rho$ with the boundary 
	of $CH(P)$. We can test in linear time whether we are in this scenario,
	and select the last point of $P$ contained in $\rho$.
	
	It remains the case when there are points of $P$ on both sides of the $y$-axis.
	We then search for the line $\ell$ with equation $y=ax+b$ with minimum value $b$ such that
	all the points $p=(p_x,p_y)$ of $P$ lie below or on $\ell$. This is the following 
	LP with real variables $a,b$: 
	\[
		b^*=\min \{ b \mid \forall p\in P: ap_x+b \ge p_y \}.
	\]
	The point $(0,b^*)$ is the last intersection of the ray with $CH(P)$.
	See \cref{fig:CH} for a schema.
	If $(0,b^*)$ is not a vertex of $CH(P)$, then the line $y=a^*x+b^*$ 
	defined by an optimal solution $(a^*,b^*)$
	supports an edge of $CH(P)$.
	Since this is a linear program with $2$ variables and $n$ constraints,
	it can be solved in $O(n)$ time.
\end{proof}

Recall that a maximal matching is a matching where we cannot add any
additional edge and keep having a matching. In other words,
each edge of the graph has at least one vertex in common with
some edge of the matching.

\begin{lemma}\label{lem:maximal}
	Let $\sigma$ be a segment and let $\ell$ be its supporting line.
	Let $A$ be a set of at most $n$ points to one side of $\ell$
	and let $B$ be a set of at most $n$ points to the other 
	side of $\ell$.
	In $O(n\log n)$ time we can compute a maximal matching 
	in the bipartite graph
	\[
		G(A,B,\sigma) ~=~ (A\cup B, \{ ab\mid a\in A,~ b\in B, 
			\text{ $ab$ intersects $\sigma$}\}).
	\]
\end{lemma}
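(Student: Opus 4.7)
\medskip
\noindent\emph{Proof plan.} My plan is to transform the problem into a bipartite interval-overlap matching and then solve that by a standard sweep. First, by a rigid motion I will assume that $\ell$ is the $x$-axis, that $\sigma$ has endpoints $(x_1,0)$ and $(x_2,0)$ with $x_1 < x_2$, that $A$ lies strictly above $\ell$ and $B$ strictly below. I will fix an auxiliary horizontal line $\ell_0$ below every point of $A\cup B$, and for each point $p\in A\cup B$ associate an interval $I(p)$ on $\ell_0$ whose endpoints are the intersections of the lines $px_1$ and $px_2$ with $\ell_0$. All such intervals can be computed in $O(n)$ time.

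The geometric heart of the argument will be the claim that, for $a\in A$ and $b\in B$, the segment $ab$ crosses $\sigma$ if and only if $I(a)\cap I(b)\neq\emptyset$. The forward direction is immediate: the point $r$ where line $ab$ meets $\ell_0$ lies in both $I(a)$ and $I(b)$. For the converse I will pick any $r \in I(a)\cap I(b)$ and consider the closed convex wedge $W_r$ with apex $r$, bounded by the rays from $r$ through $x_1$ and from $r$ through $x_2$, on the side containing $\sigma$. By definition of $I(\cdot)$ both $a$ and $b$ lie in $W_r$, and by convexity so does the whole segment $ab$; since $a$ and $b$ are on opposite sides of $\ell$, the segment must cross $\ell$ at a point of $W_r\cap \ell = \sigma$. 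Establishing this convex-wedge equivalence is the step that needs the most care and is the main obstacle; the rest is fairly routine.

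Given this equivalence, it remains to compute a maximal matching in the bipartite intersection graph of the intervals $I(p)$, with parts $A$ and $B$. I will use a left-to-right sweep over the $2n$ interval endpoints, sorted in $O(n\log n)$ time. The sweep maintains two queues $Q_A$ and $Q_B$ of unmatched intervals whose left endpoint has been passed but whose right endpoint has not. At a left-endpoint event of color $X\in\{A,B\}$ I append the interval to $Q_X$ and, whenever both queues are non-empty, pop one element from each and output them as a matched pair. At a right-endpoint event I remove the interval from its queue if it is still there. The matching is maximal, because if two unmatched intervals $I(a)$ and $I(b)$ overlapped at the end, then at the later of their two left-endpoint events both would have been sitting in their queues and the algorithm would have paired them. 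Sorting the events dominates the cost, giving the claimed $O(n\log n)$ bound.
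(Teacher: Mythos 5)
Your proof is correct, and it takes a route that differs from the paper's in two respects. The paper makes $\sigma$ vertical and maps each point $p$ to the pair $\varphi(p)=(\varphi_1(p),\varphi_2(p))$ of slopes towards the two endpoints of $\sigma$, characterizes ``$ab$ crosses $\sigma$'' by the dominance condition $\varphi_1(a)\le\varphi_1(b)$, $\varphi_2(a)\ge\varphi_2(b)$, and then runs a greedy sweep in $\varphi_1$-order that stores unmatched candidates of $A$ in a balanced binary search tree keyed by $\varphi_2$. Your reduction --- projecting $\sigma$ through each point onto an auxiliary line $\ell_0$ and characterizing crossing as overlap of the resulting intervals, proved via the convex-wedge argument --- is equivalent in content (your interval overlap is exactly the paper's dominance condition in disguise, since for $a$ above and $b$ below the two slope intervals are oppositely oriented), but your self-contained wedge proof replaces the paper's rotation argument and citation. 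The bigger difference is algorithmic: because all ``open'' intervals at the sweep position pairwise intersect, your endpoint sweep needs only two queues rather than a search tree, which is simpler; both versions are dominated by the $O(n\log n)$ sorting step. Two small points to tighten: specify that at equal coordinates left-endpoint events are processed before right-endpoint events (so that segments $ab$ meeting $\sigma$ exactly at an endpoint of $\sigma$, which correspond to intervals that merely touch, are not missed --- the lemma does not assume general position), and note that deletion at a right-endpoint event should be done via a stored pointer (or lazy marking) to keep it $O(1)$. Finally, be aware that the paper's follow-up remark, that a suitable choice of partner (smallest feasible $\varphi_2$) upgrades the greedy to a \emph{maximum} matching, does not carry over to your queue-based rule as stated; this is irrelevant for the lemma, which only asks for maximality, but worth knowing if you intend to reuse the subroutine.
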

\begin{proof}
	Making a geometric transformation, we may assume that 
	$\sigma$ and $\ell$ are vertical, that $A$ is to the left of $\ell$,
	and $B$ to the right.
	Let $u$ and $v$ be the endpoints of $\sigma$ with $v$ above $u$.

	We define the function 
	$(\varphi_1,\varphi_2)=\varphi\colon\RR^2\setminus\ell \rightarrow \RR^2$
	by
	\begin{align*}
		\varphi_1(p) ~&=~ \text{slope of the line supporting $pu$},\\ 
		\varphi_2(p) ~&=~ \text{slope of the line supporting $pv$}.
	\end{align*}
	For points $a$ to the left of $\ell$ we have 
	$\varphi_1(a) < \varphi_2(a)$, while for points $b$
	to the right of $\ell$ we have $\varphi_1(b) > \varphi_2(b)$.
	Therefore, each number in the interval $[\varphi_1(a),\varphi_2(a)]$
	corresponds to a slope such that the line through $a$ with that 
	slope intersects $\sigma=uv$. A similar statement
	holds for $b$.

	\begin{figure}[htb]\centering
		\includegraphics[page=12,scale=1]{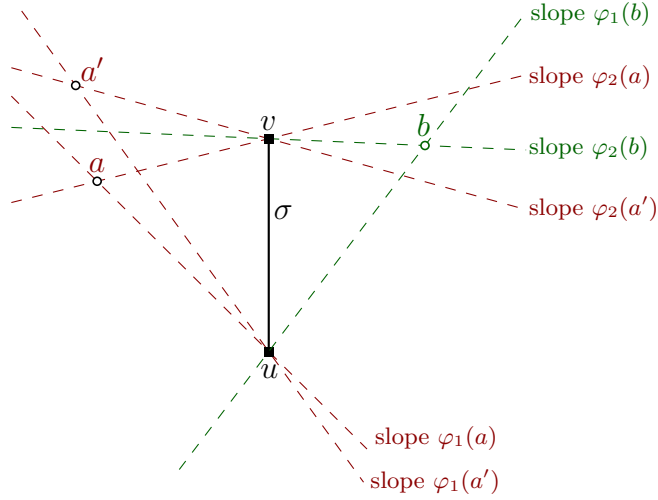}
		\caption{Proof of \cref{lem:maximal}. 
			The definition of the transformation
			$\varphi=(\varphi_1,\varphi_2)$.}
		\label{fig:varphi}
	\end{figure}
	
	Using that $a$ is to the left of $\ell$ and $b$ to the right,
	we note that $ab$ intersects $uv$ if and only if 
	$\varphi_1(a)\le \varphi_1(b)$ 
	and $\varphi_2(a)\ge \varphi_2(b)$.
	See \cref{fig:varphi}.
	One way to see this is noting that $ab\cap uv\neq \emptyset$
	if and only if the line supporting $au$ can be rotated counterclockwise 
	around $u$ until it becomes the line supporting $ub$, and 
	the line supporting $av$ can be rotated clockwise around $v$ 
	to turn it into the line supporting $bv$.
	This mapping $\varphi$ and an application is discussed in 	
	Cabello and Milinkovi{\'c}~\cite[Lemma 3]{CabelloM18},
	using point-line duality as an intermediary step in the discussion.

        We build the matching incrementally as follows. We process the points of
        $B$ one by one. For each point $b\in B$, we check whether it has an
        unmatched neighbor in $A$. If so, we match $b$ with $a$. It is clear
        that the matching produced by this procedure is maximal.

        In order to carry out the procedure efficiently, we process the points
        of $A$ and $B$ in order of increasing value of $\varphi_1$. At any point
        during the algorithm, we have processed all points $b\in B$ with
        $\varphi_1(b)\le t$ for some threshold $t$. We maintain the set $A'$
        of unmatched points $a\in A$ with $\varphi_1(a)\le t$. As we advance $t$,
        we may hit a point $a\in A$ with $\varphi_1(a)= t$ or a point $b\in B$ with
        $\varphi_1(b)= t$. In the first case, we simply add $a$ to $A'$.
        In the second case, any unmatched neighbor
        of $b$ is contained in $A'$, as all points $a\in A \setminus A'$ satisfy
        $\varphi_1(a)>\varphi_1(b)$. It remains to check whether $A'$ contains a point
        $a'$ with  $\varphi_2(a')\ge \varphi_2(b)$. If such a point exists, we include
        $a'b$ in the matching and remove $a'$ from $A'$.

        If we maintain $A'$ in a balanced binary search tree, ordered by the key
        $\varphi_2$, then insertions, deletions and the search for an element with
        large enough key can be performed in $ O(\log n)$ time.
        In total, the running time is $O(n \log n)$.
       \end{proof}

      Note, this algoritm gives a maximum matching with a suitable choice of  $a'\in A'$.
      If we  choose the vertex with the smallest
      $\varphi_2(a')$ value among the vertices $\varphi_2(a')\ge \varphi_2(b)$ in each step, we get a maximum matching.

\section{Balanced separation with a short path}
\label{sec:separation}
	
In this section we provide a structural result about splitting the convex hull of a point set with a single edge or with a 2-edge path in such a way that both sides contain a large fraction of the point set. This will be used later in our proofs of lower bounds. A very similar result can be found in \'Abrego and Fern\'andez-Merchant~\cite[Lemma~2]{AbregoF17}. We include a proof because their bound has a small error\footnote{Lemma 2 in~\cite{AbregoF17} is not correct for $n=4$ because a ceiling was missing in the bound. The authors have an updated, corrected version in arXiv.}, our approach is slightly different in the treatment of the triangular case (\cref{thm:split-triangle}), we develop a colored version (\cref{lem:separatingpath_colors1} and \cref{lem:separatingpath_colors2}), and because we discuss the algorithmic counterpart, a part that is not considered in~\cite{AbregoF17} and that forces us to rework a proof.
	
We first consider the case when the convex hull is a triangle and the partition can be with different numbers of points. This will be a tool for the general case. See \cref{fig:split-triangle1} to visualize the following statement.

\begin{figure}[htb]\centering
	\includegraphics[page=1,scale=1]{figures}
	\caption{Statement in \cref{thm:split-triangle}.}
	\label{fig:split-triangle1}
\end{figure}

\begin{theorem}\label{thm:split-triangle}
	Assume that we have a triangle with vertices $p_0$, $p_1$, and $p_2$
	and in its interior there is a set $P$ of $m\ge 1$ points 
	such that $P \cup \{p_0,p_1,p_2\}$ is in general position. 
	For any integer weights $w_0,w_1,w_2$ such that $0\le w_0,w_1,w_2 < m$ 
	and $\ell:=w_0+w_1+w_2 > 2m -3$, there exist at least $\ell-2m+3>0$ points $q\in P$ 
	such that, for each $i\in \{0,1,2\}$,
	the triangle $\triangle(p_i q p_{i+1})$ contains at most $w_{i+2}$ 
	points of $P$ in its interior, where all indices are modulo $3$. %

	We can find $\ell-2m+3$ points with this property in linear time.
\end{theorem}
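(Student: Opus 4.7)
\textbf{Proof plan for Theorem \ref{thm:split-triangle}.}

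My plan is to reduce the theorem to a clean counting argument by exploiting the following key geometric observation: if we order the points of $P$ by their (perpendicular) distance to the line supporting the edge $p_i p_{i+1}$, then the triangle $\triangle(p_i q p_{i+1})$ can only contain points that are strictly closer to this line than $q$ is. Indeed, any point in the interior of $\triangle(p_iqp_{i+1})$ lies strictly between the edge $p_ip_{i+1}$ and the line through $q$ parallel to $p_ip_{i+1}$, hence must be closer to the supporting line than $q$.

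With this in hand, I would proceed as follows. For each $i\in\{0,1,2\}$, let $A_i\subseteq P$ denote the set of the $w_{i+2}+1$ points of $P$ closest to the line through $p_i p_{i+1}$ (choosing any tie-breaking among the finitely many candidates, which is harmless by general position). By the observation above, for every $q\in A_i$, the triangle $\triangle(p_i q p_{i+1})$ contains at most $w_{i+2}$ other points of $P$ in its interior, since only points strictly closer to $p_ip_{i+1}$ than $q$ can be inside, and there are at most $w_{i+2}$ such points in $P$. Hence any point $q\in A_0\cap A_1\cap A_2$ satisfies all three conditions simultaneously and is a valid choice.

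It remains to lower bound $|A_0\cap A_1\cap A_2|$. By complementing and using the union bound,
\[
   |A_0\cap A_1\cap A_2|\ \ge\ m-\sum_{i=0}^{2}|P\setminus A_i|\ =\ m-\sum_{i=0}^{2}(m-1-w_{i+2})\ =\ w_0+w_1+w_2-2m+3\ =\ \ell-2m+3,
\]
which is positive by the assumption $\ell>2m-3$. This gives the claimed existence.

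For the algorithmic part, computing each $A_i$ reduces to a $k$-selection problem on the $m$ signed distances of the points in $P$ to the line through $p_ip_{i+1}$; by the classical linear-time selection algorithm cited in Section \ref{sec:algo_tools}, this takes $O(m)$ time. Intersecting the three sets $A_0,A_1,A_2$ can be done in $O(m)$ time using a counter per point. I do not foresee a real obstacle: the only subtlety is to verify that the ``closer to the edge than $q$'' argument is robust under the weak form of general position we have (no three points collinear), which is clear because a second point at the same distance from the edge as $q$ still does not lie inside the open triangle $\triangle(p_iqp_{i+1})$, so the bound $a_i(q)\le j-1$, where $j$ is the rank of $q$, goes through regardless of the chosen tie-breaking in the selection step.
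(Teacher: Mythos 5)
Your argument is correct and is essentially the paper's proof in a slightly different guise: where the paper rotates, for each $i$, a ray about a vertex and discards the first $m-w_i-1$ points in the resulting angular order, you sweep a line parallel to each edge and discard the $m-w_{i+2}-1$ points farthest from it, the point being in both cases that the interior of $\triangle(p_i q p_{i+1})$ can only contain points preceding $q$ in the corresponding one-dimensional order, so each $q$ surviving all three eliminations works. The union-bound count of surviving points (at least $\ell-2m+3$) and the linear-time algorithm via three $k$-selection calls are the same as in the paper, and your handling of distance ties under the weak general-position assumption is sound.
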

\begin{proof}
	In this proof, all indices are modulo $3$.
	For $i\in \{0,1,2\}$, consider a ray $r_i$ that starts at $p_{i-1}$ and 
	goes through $p_i$. We rotate $r_i$ around $p_{i-1}$ in the direction 
	towards $p_{i+1}$ until we pass $r_i$ over $m-w_i-1$ points of~$P$. 
	See Figure~\ref{fig:split-triangle2}, left, to visualize the case $i=1$.
	For any of the points $q\in P$ we did not scan over, the triangle
	$\triangle(p_{i-1} q p_{i+1})$ contains at most $w_i$ points of $P$
	in its interior; note that $q$ is not in the interior of
	$\triangle(p_{i-1} q p_{i+1})$.

	\begin{figure}[htb]\centering
		\includegraphics[page=2,scale=1]{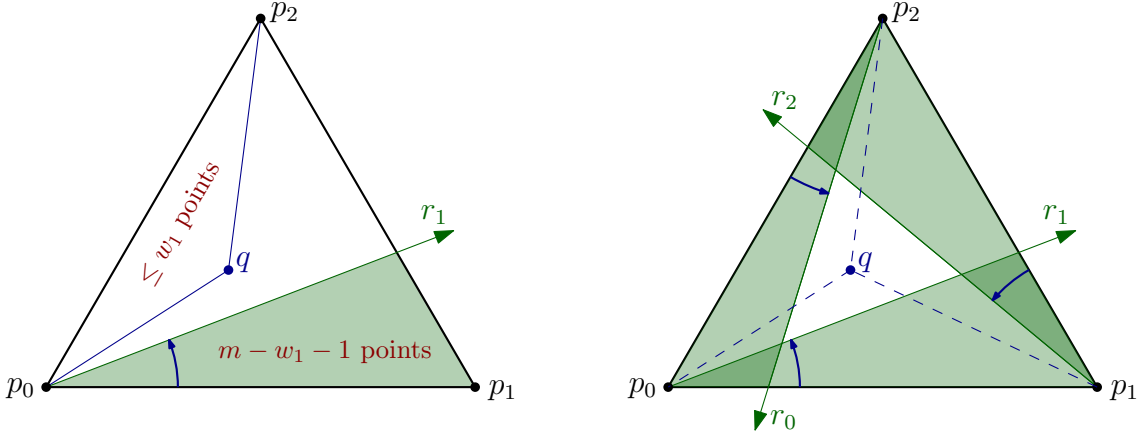}
		\caption{Left: rotating $r_1$ until we pass over $m-w_1-1$ points. 
			Any point not scanned by $r_1$ defines with $p_0$ and $p_2$ a
			triangle with at most $w_1$ points.
			Right: the part of the triangle that is not shadowed contains 
			at least $\ell-2m+3$ points.}
		\label{fig:split-triangle2}
	\end{figure}

	Some points of $P$ may be scanned more than once, but in total 
	we scan at most $3m-w_1-w_2-w_3-3 = 3m-\ell-3$ points. 
	So there are at least $m-(3m-\ell-3) = \ell-2m+3> 0$ points remaining,
	and each of them satisfies the desired property. See Figure~\ref{fig:split-triangle2}, right.
	
	To show the \emph{algorithmic claim}, we note that, for each $i\in \{0,1,2\}$,
	the points scanned by the rotation of $r_i$ can be computed in linear time.
	To see this, we associate the number 
	$\alpha(q)=\measuredangle(p_ip_{i-1}q)$ to each point $q\in P$. 
	We then compute the point $q_i$ of $P$ solving the $(m-w_i-1)$-selection
	problem with respect to $\{ \alpha(q)\mid q\in P\}$, which takes linear time.
	All points $q\in Q$ with $\alpha(q)\le \alpha(q_i)$ are marked as scanned.
	Note that we do not need to compute actual angles and that it suffices 
	to use orientation tests to compare angles.
	After performing this for $i=0,1,2$, the points that remain
	unmarked have the desired property.
\end{proof}		
	
As a special case we state the following corollary, which might be of its own interest.
	
\begin{corollary}\label{cor:triangle}
	Let $\Delta$ be a triangle with a set $P$ of $m\ge 1$ points in its interior. 
	Then there is a point of $P$ that splits $\Delta$ into three triangles, 
	such that none of these triangles contains more than $\lceil(2m-2)/3\rceil$ points of $P$ 
	in its interior.
\end{corollary}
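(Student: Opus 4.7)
The plan is to obtain this corollary as an immediate specialization of \cref{thm:split-triangle}, using uniform weights $w_0=w_1=w_2=w$ chosen to match the stated bound.

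First I would set $w := \lceil (2m-2)/3 \rceil$ and apply \cref{thm:split-triangle} to the triangle $\Delta$ (whose vertices play the role of $p_0,p_1,p_2$) with weights $w_0=w_1=w_2=w$. To be allowed to do so, I need to check the two hypotheses of that theorem: that $0 \le w < m$, and that $\ell = 3w > 2m-3$. The first holds because $\lceil (2m-2)/3 \rceil \le (2m)/3 < m$ for every $m\ge 1$ (a quick case check for $m=1,2$ confirms the small cases and the general estimate handles $m\ge 3$). The second holds because $3w \ge 3 \cdot (2m-2)/3 = 2m-2 > 2m-3$.

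With these hypotheses verified, \cref{thm:split-triangle} guarantees at least $\ell - 2m + 3 \ge 1$ point $q\in P$ such that, for each $i\in\{0,1,2\}$, the triangle $\triangle(p_i q p_{i+1})$ contains at most $w_{i+2} = \lceil (2m-2)/3 \rceil$ points of $P$ in its interior. Connecting such a point $q$ to the three vertices $p_0, p_1, p_2$ of $\Delta$ splits $\Delta$ into exactly these three triangles, and the bound on each of them is precisely what the corollary asks.

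I do not foresee any real obstacle here: the only thing to be careful about is to verify the inequality $\lceil (2m-2)/3 \rceil < m$ in the small cases (particularly $m=1$ and $m=2$), so that the weights are admissible in \cref{thm:split-triangle}. Everything else is a direct instantiation.
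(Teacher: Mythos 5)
Your instantiation of \cref{thm:split-triangle} with the uniform weights $w_0=w_1=w_2=\lceil(2m-2)/3\rceil$ is exactly the route the paper takes, and your verification of the hypotheses is correct: $\lceil(2m-2)/3\rceil\le 2m/3<m$ for $m\ge 1$ (you even check this explicitly, which the paper leaves implicit), and $3w\ge 2m-2>2m-3$ is the same computation as in the paper's proof.

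There is, however, one small but genuine gap. \cref{cor:triangle} is stated \emph{without} any general position assumption, whereas \cref{thm:split-triangle} requires $P\cup\{p_0,p_1,p_2\}$ to be in general position. If some points of $P$ are collinear with one another, or collinear with a vertex (or two vertices) of $\Delta$, your direct instantiation is not licensed. The paper closes this by first perturbing the points into general position without changing the positive or negative orientation of any triple, applying \cref{thm:split-triangle} to the perturbed set to obtain the point $q$, and then undoing the perturbation: under this reversal a point can only move onto the boundary of one of the triangles $\triangle(p_i q p_{i+1})$, so the number of points strictly inside each triangle cannot increase, and the bound $\lceil(2m-2)/3\rceil$ survives. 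Adding this perturbation-and-undo step makes your argument complete; without it you have only proved the corollary under an extra general-position hypothesis that the statement does not assume.
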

\begin{proof}
	Perturb the points to general position, if needed, without changing the
	positive or negative orientation of any triple of points.
	Use now \cref{thm:split-triangle} with 
	$w_0=w_1=w_2 = \lceil(2m-2)/3\rceil$,
	which satisfy $w_0+w_1+w_2= 3 \lceil(2m-2)/3\rceil > 2m-3$,
	to obtain a point $q$.
	Each triangle defined by $q$ and any two vertices of $\Delta$
	contains at most $\lceil(2m-2)/3\rceil$ of $P$ in its interior. 
	Undoing the perturbation, some 
	points may move to the boundary of some of the triangles,
	but the number of points in the interior of a triangle cannot increase.
\end{proof}	
	
This result resembles the classical Centerpoint Theorem~\cite[Section 1.4]{Matousek2002},
which tells that for each set $P$ of $n$ points in the plane there exists a so-called 
centerpoint $q$ with the property that each open halfplane that does not contain $q$ 
has at most $2n/3$ of the points of $P$ inside.
However, the centerpoint does not need to be a point of $P$, it exists independently
of the shape of the convex hull,
and for some point sets it cannot be an element of $P$.
	
Recall that $CH(P)$ denotes the convex hull of $P$. A point $p \in P$ is \emph{extremal} for
(or an extreme point of) $P$ if it lies on the boundary of $CH(P)$. 
A \emph{$k$-separating path} for $P$ is a path $\pi$ spanned 
by vertices of $P$ and connecting two different extremal points of $P$ 
such that $CH(P)\setminus \pi$ has two parts, each containing at least $k$ points;
note that the points on the path are counted in no part.
The separation gets more balanced as $k$ increases.
See \cref{fig:separating1}.
The \emph{length} of such a path is its number of edges.	

\begin{figure}[htb]\centering
	\includegraphics[page=3,scale=1]{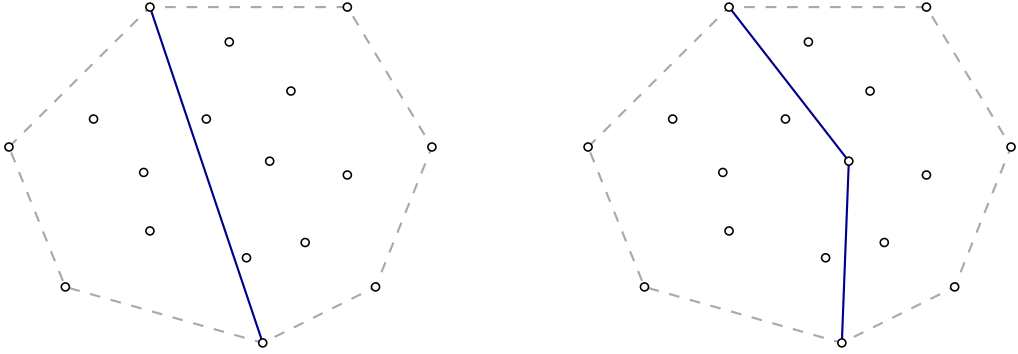}
	\caption{Left: $5$-separating path of length $1$.
			Right: $7$-separating path of length $2$.}
	\label{fig:separating1}
\end{figure}	
	
\begin{theorem}\label{thm:separatingpath}
	Let $P$ be a set of $n \geq 2$ points in general position in the plane. 
	Then there exists a $\lceil\frac{n-4}{3}\rceil$-separating path of length $1$ or $2$. %
	Such a separating path can be found in time linear in $n$.
\end{theorem}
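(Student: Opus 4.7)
\textbf{Plan for \cref{thm:separatingpath}.}

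The plan is to pursue a two-case analysis: try a single-edge separator first, and, if that fails, build a $2$-edge separator using \cref{thm:split-triangle}. Set $s:=\lceil(n-4)/3\rceil$ and fix an extreme point $p_0\in P$ (the leftmost, say). Label the remaining convex hull vertices $u_2,u_3,\dots,u_h$ in counterclockwise order. For every $i\in\{2,\dots,h\}$ let $L_i$ be the number of points of $P\setminus\{p_0,u_i\}$ lying angularly before $u_i$ as seen from $p_0$, so $L_2=0$, $L_h=n-2$, and $L_{i+1}-L_i=1+n_i$ where $n_i$ is the count of interior points of $\triangle(p_0 u_i u_{i+1})$.

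If some $L_i$ lies in $[s,\,n-2-s]$, then the edge $p_0u_i$ is already a $\lceil(n-4)/3\rceil$-separating path and we are done. Otherwise the sequence jumps over the interval: there is $a$ with $L_a\le s-1$ and $L_{a+1}\ge n-1-s$, so the hull edge $u_au_{a+1}$ bounds a triangle $T:=\triangle(p_0 u_a u_{a+1})$ containing $m:=n_a\ge n-2s-1$ interior points of $P$. For a candidate $q\in P\cap\operatorname{int}(T)$, write $A,B,C$ for the numbers of points of $P$ in the interiors of $\triangle(u_aqu_{a+1})$, $\triangle(p_0qu_a)$, $\triangle(p_0qu_{a+1})$, so $A+B+C=m-1$. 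Setting $k_L=L_a+1$, $k_R=L_{a+1}+1$, counting points on each side of each candidate $2$-path gives three candidates: $u_a q u_{a+1}$ is $s$-separating iff $A\in[s,n-3-s]$; $p_0 q u_a$ iff $k_L-1+B\in[s,n-3-s]$; and $p_0 q u_{a+1}$ iff $n-1-k_R+C\in[s,n-3-s]$.

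I would then apply \cref{thm:split-triangle} to $T$ with weights $w_A$, $w_B$, $w_C$ tuned to enforce the upper endpoints of these three intervals---for instance $w_A=\min\{m-1,n-3-s\}$ and $w_B,w_C$ capped analogously. The inequality $s\le(n-3)/2$, which holds for $s=\lceil(n-4)/3\rceil$ once $n\ge 3$, together with $m\le n-3$, makes the hypothesis $w_A+w_B+w_C>2m-3$ satisfiable, so a suitable $q\in P\cap T$ exists. The verification that at least one of the three $2$-paths works for the resulting $q$ is then a pigeonhole on $A+B+C=m-1$: if all three failed, each of $A$, $k_L-1+B$, $n-1-k_R+C$ would be at most $s-1$, so $A+B+C\le 3s-1-(k_L-1)-(n-1-k_R)=3s+m-n$, which contradicts $A+B+C=m-1$ since $s<(n-1)/3$ for $s=\lceil(n-4)/3\rceil$.

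The main obstacle I foresee is the tight case $m=n-2s-1$ with $k_L=s$, $k_R=n-s$, where the caps $w_i\le m-1$ collide with the interval upper bounds and the valid ranges collapse; breaking symmetry---for example by picking $q\in P\cap T$ of minimum distance to the line $u_au_{a+1}$, which forces $A=0$ and hence $B+C=m-1\ge 1$---then puts at least one of $B,C$ into its required range, making the corresponding $2$-path $s$-separating. For the algorithmic claim, $p_0$ is obtained in linear time, the pivotal index $a$ (or the $L_i$ inside the target interval) is obtained via $k$-selection on angles around $p_0$ using only orientation tests, the application of \cref{thm:split-triangle} is linear, and the selection of $q$ (either from the triangle theorem or by one pass for minimum distance in the tight case) is also linear, giving an $O(n)$ algorithm overall.
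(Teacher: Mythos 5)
Your existence argument is essentially the paper's proof in different clothing: sort radially around an extreme point, take a single edge if some hull vertex falls in the middle angular range, and otherwise apply \cref{thm:split-triangle} to the triangle over the hull edge spanning that range, with weights chosen so that each of the three candidate $2$-paths has a bounded side, finishing by pigeonhole. Your weight choice $w_X=\min\{m-1,c_X\}$ (with $c_A=n-3-s$, $c_B=n-3-s-L_a$, $c_C=L_{a+1}-s-1$) does satisfy the hypotheses of \cref{thm:split-triangle}: in every subcase the sum condition reduces to $2s\le n-2$ or $3s\le n-1$, both of which follow from $s=\lceil(n-4)/3\rceil\le(n-2)/3$ --- note that your cited inequality $s\le(n-3)/2$ alone is not what is needed, so this verification should be written out. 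Also, the ``tight case'' you worry about is not actually a problem: since $w_X\le c_X$ by construction, the upper-bound enforcement never degrades, and when some $w_X=m-1$ the sum condition only gets easier; your minimum-distance fix is harmless but unnecessary. (You should also dispose of $n\le 4$, where $s=0$ and the two-parts condition is degenerate, as the paper does.)

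The genuine gap is in the linear-time claim, which is part of the statement. As written, your first phase enumerates the hull vertices $u_2,\dots,u_h$ and their prefix counts $L_i$, which presupposes the convex hull and hence $O(n\log n)$ time; and ``$k$-selection on angles'' only yields the points of angular rank $s$ and $n-2-s$, it does not tell you whether any point in the middle angular range is extremal, nor which hull edge $u_au_{a+1}$ spans that range when none is. This is exactly the step the paper handles with \cref{lem:CH}: shoot the two rays from $q_0$ through the selected points, find their last hull intersections by a $2$-variable LP in $O(n)$ time, test whether the supporting line has all of $P$ on one side (if not, an extreme point swept by translating that line gives the length-$1$ separator; if so, the two points of $P$ on the line are the hull edge). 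Without this ingredient, or an equivalent linear-time mechanism for detecting extremality in the middle range and recovering the hull edge, your algorithm as described does not achieve the claimed $O(n)$ bound.
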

\begin{proof}
    For $n \leq 4$ the statement is trivially true. So for the reminder of the proof assume that $n \geq5$.
	Let us set $r = \lceil (2(n-3)-2)/3\rceil =\lceil (2n-8)/3\rceil$.
	The intuition is that $r$ is the bound of \cref{cor:triangle}
	for $n-3 \geq 1$ points; in our current setting, $n$ is also counting the vertices
	of the triangle.
	We also set $k= \lceil (n-4)/3 \rceil \geq 1$ as $n \geq 5$.
	Note that $n-4 \le r+k \le n-3 $. 
	The task is to show the existence of a $k$-separating path of length $1$ or $2$.
	
	Choose an extremal point $q_0 \in P$ with the smallest $y$-coordinate.
	Let $q_1,\dots , q_{n-1}$ be the points $P\setminus \{ q_0\}$ sorted
	increasingly by the angle $\overline{q_0 q_i}$ makes with 
	the horizontal rightward ray from $q_0$.
	See \cref{fig:separating2}, left.

	\begin{figure}[htb]\centering
		\includegraphics[page=4,scale=1]{figures}
		\caption{Proof of \cref{thm:separatingpath}.}
		\label{fig:separating2}
	\end{figure}	
	
	If between $q_k$ and $q_{n-k}$ there is some extremal
	point $q_j$ for $P$, which implies that $k<j<n-k$, 
	then the segment $q_0q_j$ is a $k$-separating path of length $1$
	and we are done. See \cref{fig:separating2}, right.
	Otherwise, the rays $q_0q_k$ and $q_0q_{n-k}$ intersect
	the same edge $e$ of $CH(P)$. Let $q_a q_b$ be the edge $e$,
	with $a<b$. 
	This means $a\le k < n-k \le b$ and the triangle 
	$\triangle(q_0 q_a q_b)$ has exactly $b-a-1$ points in its interior.
	See \cref{fig:separating4}, left.	Note that we may
	have $a=k$ and $b=n-k$.
	We have $n-2 \ge b-a\ge n-2k$.

	\begin{figure}[htb]\centering
		\includegraphics[page=5,width=\textwidth]{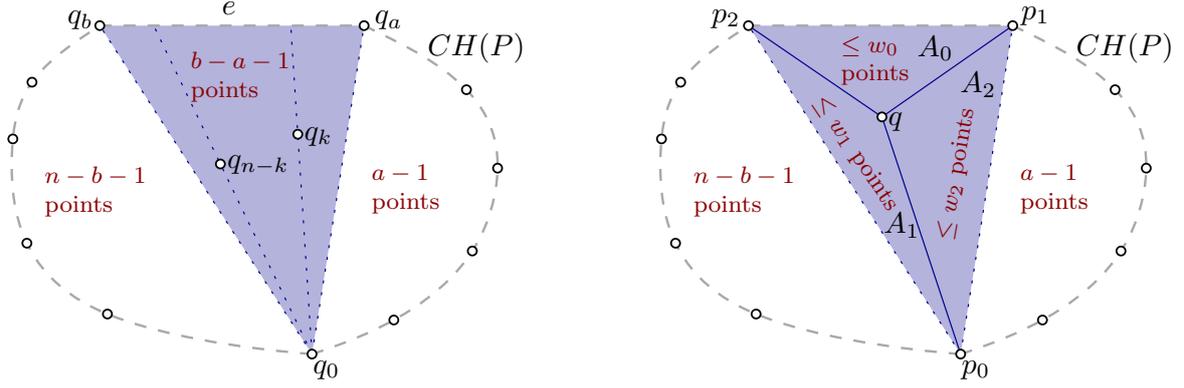}
		\caption{Continuation of the proof of \cref{thm:separatingpath}.}
		\label{fig:separating4}
	\end{figure}	
	
	We want to apply \cref{thm:split-triangle} to $\triangle(q_0q_aq_b)$
	and the $m=b-a-1 \geq n-2k-1 = n-2\lceil (n-4)/3 \rceil - 1 \geq n-2(n-2)/3-1 = (n+1)/3 \geq 1$ points of $P$ in its interior.
	To this end, set $p_0=q_0$, $p_1=q_a$, $p_2=q_b$,
	$w_0=r$, $w_1=r-(n-b-1)$, and $w_2=r-(a-1)$.
	See \cref{fig:separating4}, right.
	To apply \cref{thm:split-triangle}, we must verify that
	$w_0+w_1+w_2> 2m-3$:
	\begin{align*}
		w_0+w_1+w_2 ~&=~ r + \big( r -(n-b-1)\big) + \big( r - (a-1)\big) \\
		&=~ 3r - n + b - a +2 \\
		&\ge~ 3r - n + b-a +2 + \big(b-a - n +2\big) &&\hfill\text{~using $n-2 \ge b-a$}\\
		&=~ 3 \lceil (2n-8)/3\rceil  - 2n + 2(b-a) + 4 &&\hfill\text{~using $r=\lceil (2n-8)/3\rceil$} \\
		&\ge~ \big( 2n-8\big) - 2n + 2 (m+1) +4 &&\hfill\text{~using $m=b-a-1$} \\
		&>~ 2m-3.
	\end{align*}
	\cref{thm:split-triangle} guarantees the existence of a point $q\in P$ in the
	interior of $\triangle(p_0p_1p_2)=\triangle(q_0q_aq_b)$ that splits it into 
	three triangular pieces such that the interior of the triangle $\triangle(p_{i-1}qp_{i+1})$ 
	has at most $w_i$ points of $P$ (for $i=0,1,2$ and indices modulo $3$).
	
	We split $CH(P)$ into three parts $A_0,A_1,A_2$
	by removing the segments $qq_0=qp_0$, $qq_a=qp_1$, and $qq_b=qp_2$;
	the part $A_i$ is the one whose closure is disjoint from the relative 
	interior of $qp_i$ (for $i=0,1,2$).
	See \cref{fig:separating4}, right. 
	The points $q, q_0, q_a, q_b$ belong to no part, while all the other points of $P$
	belong to exactly one part.
	From the choices of weights $w_i$, each part contains at most $r$ points of $P$.
	For example, $A_1$ contains at most $(n-b-1)+w_1=r$ points.
	
	Let $B$ be a part $A_j$ that contains the most points of 
	$P\setminus \{ q,p_0,p_1,p_2 \}$. Let $\pi$ be the separating path of length $2$ that
	separates $A_j=B$ from $CH(P)\setminus A_j$; the path $\pi$ is the concatenation 
	of $p_{j-1}q$ and $qp_{j+1}$ (indices modulo $3$). 
	Let $B'$ be the other part of $CH(P)\setminus \pi$; it contains 
	$A_{j-1}$, $A_{j+1}$ and its common boundary (indices modulo $3$).

	By the pigeonhole principle, $B$ contains at least $\lceil (n-4)/3\rceil = k$ points.
	On the other hand, $B$ contains at most $r$ points, which means that 
	$B'$ contains $(n-3)-r \ge k$ points. It follows that $\pi$ is a $k$-separating 
	path of length $2$. 
	
	It remains to show the \emph{algorithmic claim}.
	Finding $q_0$ takes linear time by scanning the point set.
	The points $q_k$ and $q_{n-k}$ can be found in linear time because it is the $k$-
	and $(n-k)$-selection problem of $P$ with respect to the angle $q_0q_i$
	makes with the horizontal rightward ray from $q_0$. One does not need to compute the angle
	and can just use orientation tests.
	Using \cref{lem:CH}, we find in linear time the last intersection of the rays
	$q_0q_k$ and $q_0q_{n-k}$ with the boundary of $CH(P)$. Let $\ell$ be
	the line supporting those two intersections.
	
	We test whether $\ell$ has all the points of $P$ on the same side.
	If the test fails, we find the extremal point $q_j\in P$ 
	swept by the line $\ell$ when we move it away from $q_0$.
	See \cref{fig:separating6}, left.
	The point $q_j$ is then a vertex of $CH(P)$ and it lies in
	the cone defined by $q_kq_0 q_{n-k}$. In this case we are in the 
	first scenario and $q_0q_j$ is the desired separator.

	\begin{figure}[htb]\centering
		\includegraphics[page=6,scale=1]{figures}
		\caption{Algorithmic part of \cref{thm:separatingpath}.}
		\label{fig:separating6}
	\end{figure}	
	
	If the test is successful, that is, if all the points of $P$ lie on
	the same side of $\ell$, then $\ell$ supports an edge of $CH(P)$.
	This edge of $CH(P)$ is defined by the two points $u,v$ of $P$ that lie on 
	the line $\ell$, and they can be found in linear time.
	We then count how many points are in each region, that is,
	we figure out the indices $a$ and $b$ such that $u=q_a$ and $v=q_b$.
	We also compute the points in the interior of the triangle $q_0q_aq_b$,
	and use \cref{thm:split-triangle} to find in linear time
	the point $q$ used to split $\triangle(q_0q_aq_b)$.
	The rest of the proof is constructive and can be done in linear time
	by scanning and counting points.
\end{proof}

\section{Upper bounds}
\label{sec:upper}
	
In the following we provide upper bounds on the maximal size of connected matchings that exist for any given set of $n$ points.

We start with an upper bound for the uncolored case.
Consider $n$ points split into three sets $A_0, A_1, A_2$ of size roughly $\frac{n}{3}$, 
where the sizes of any two sets differ by at most one, and each $A_i$ lies on its own slightly curved blade of a three-bladed windmill; 
see \cref{fig:upperbound}, left. We use indices modulo three in the discussion.
We can form such a configuration so that each line determined by two points of $A_i$ 
separates $A_{i+1}$ from $A_{i+2}$, and no segment connecting one point of $A_i$ with 
one point of $A_{i+1}$ crosses any segment connecting two points of $A_{i+1}$. Hence, 
the set of all segments is separated into three parts where each part consists of 
segments connecting two points of $A_i$ or one point of $A_i$ and one point of 
$A_{i+1}$, and segments from different parts do not cross. Clearly, 
the size of the largest matching spanning $A_i\cup A_{i+1}$, if their sizes
differ by at most one, is $\min\{|A_i|,|A_{i+1}|\}$, 
and the largest of those values over $i\in \{ 0,1,2\}$
gives the largest connected matching. To be careful with the modulus of $n$, 
we note that there is a connected matching of maximum size $\lceil \frac{n}{3}\rceil$ 
when at least two of the sets have that size; when only one set has that size,
the largest connected matching has size $\lfloor \frac{n}{3}\rfloor$.
Thus, for each $n$ we have constructed a point set where the maximum connected matching
has size $\lceil \frac{n-1}{3}\rceil$.

\begin{figure}[tbh]
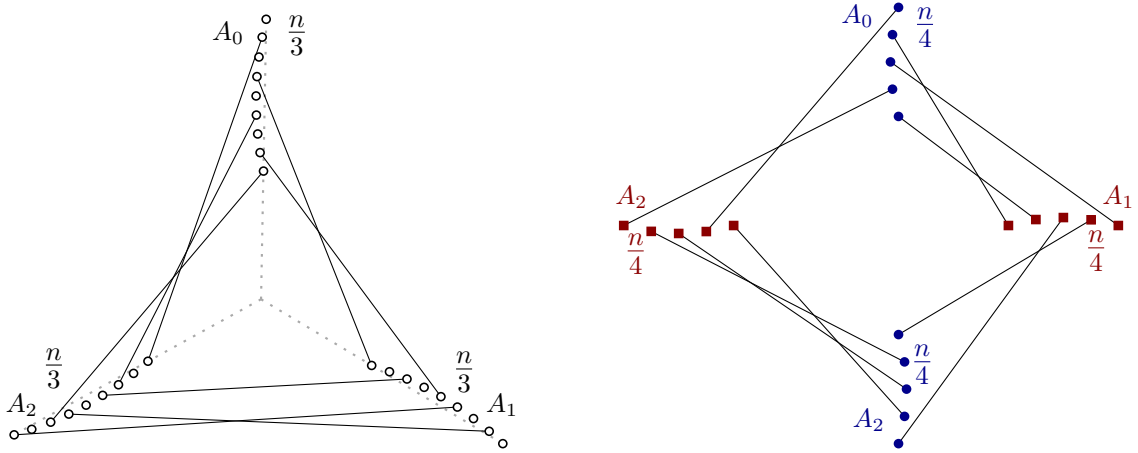
\centering
	\hfill
	\includegraphics[page=8,scale=1]{figures} 
	~~~\hfill~~~ 
	\includegraphics[page=9,scale=1]{figures}
	\hfill\mbox{}
	\caption{Upper bounds for (colored) connected matchings. Left: uncolored.
		Right: balanced $2$-colored.} 
	\label{fig:upperbound}
\end{figure}

Now, we provide an upper bound for the size of a connected matching in 
the balanced $2$-colored case. 
We consider a similar configuration. Recall
that the coloring is balanced: the cardinalities of each color class differ at most by one.
We split the points into four sets $A_0, A_1, A_2, A_3$ of size roughly $\frac{n}{4}$ 
so that the sizes of any two sets differ by at most one, and each $A_i$ lies on its own slightly curved blade of a four-bladed windmill.
The sets $A_0$ and $A_2$ contain only blue points, while $A_1$ and $A_4$ only red ones. 
See \cref{fig:upperbound}, right. In this configuration, bichromatic segments 
connecting points of $A_i$ with points of $A_{i+1}$ do not cross any other segments
(indices modulo $4$), so the size of the largest connected matching is $\frac{n}{4}$.
A maximum connected matching of size $\lceil\frac{n}{4}\rceil$ is attainable
when two sets of different colors have that cardinality, that is, when 
$n\equiv 2,3 \pmod{4}$. Thus, for each $n$ we have constructed a $2$-colored
point set where the maximum connected matching has size $\lceil \frac{n-1}{4}\rceil$.

\section{Lower bound for uncolored sets}
\label{sec:mono}	

We first consider the following special setting, 
depicted in \cref{fig:segment}, left.

\begin{figure}[tbh]\centering
	\includegraphics[page=10,width=.95\textwidth]{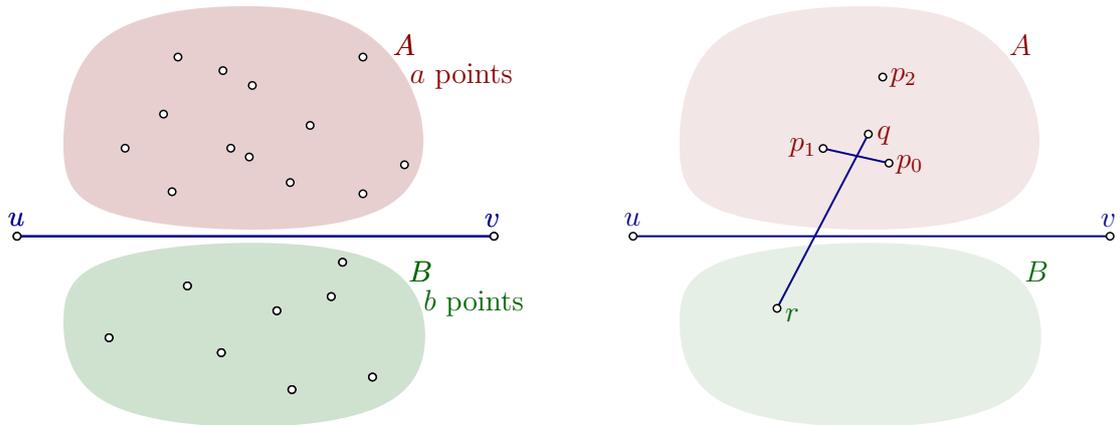}
	\caption{Left: Situation in \cref{lem:segment}.
		Right: edges added to the matching when $A$ has four points
		not in convex position.}
	\label{fig:segment}
\end{figure}
	
\begin{lemma}\label{lem:segment}
	Assume that we have a horizontal segment $uv$,
	a set $A$ of $a$ points above the line supporting $uv$,
	and a set $B$ of $b\le a$ points below the line supporting $uv$
	such that, for all $(p,q)\in A\times B$, the segment $pq$
	intersects $uv$, and $A\cup B\cup \{u,v \}$ consists
	of $a+b+2$ points in general position.
	Then, $A\cup B\cup \{u,v \}$ has a connected matching
	of size at least
	\[
	            m(a,b):=
		\begin{cases}
			1+b & \text{if $b\le a\le2b+3$},\\
			(a+3b+2)/5, & \text{if $2b+3 \le a \le 7b+3$},\\
			1+2b, & \text{if $a \ge 7b+3$}.
		\end{cases}~~~~~~~~~~~~~~~~~~~~~~~~~~~~~~~~~~~~~
	\] %
	Such a connected matching can be computed in $O(1+a \log a)$ time.
\end{lemma}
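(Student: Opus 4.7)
The plan is to use the segment $uv$ as a central spine and build the connected matching around it. The hypothesis that every $A$-$B$ segment meets $uv$ means that the bipartite graph $G(A,B,uv)$ of \cref{lem:maximal} is the complete bipartite graph $K_{a,b}$. Applying \cref{lem:maximal}, I obtain in $O(a\log a)$ time a maximal matching $M_{AB}$ of size $b$, and $M_{0}=\{uv\}\cup M_{AB}$ is connected because each edge of $M_{AB}$ crosses $uv$. This matching of size $1+b$ already meets the bound in the first regime $b\le a\le 2b+3$.

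For the harder regimes I would augment $M_{0}$ with intra-$A$ edges. Any segment with both endpoints in $A$ lies strictly above the line supporting $uv$, so such an edge cannot cross $uv$ and, to keep the matching connected, it has to cross some edge of $M_{AB}$. The key geometric observation is: if $a^{*}\in A$ is the point of $A$ farthest from the line supporting $uv$ and $a^{*}$ is matched with some $b^{*}\in B$, then for every pair $p,q\in A\setminus\{a^{*}\}$ lying on opposite sides of the line through $a^{*}b^{*}$ the segment $pq$ crosses the segment $a^{*}b^{*}$. Indeed, the intersection of the two supporting lines sits at height at most $\max\{\text{height}(p),\text{height}(q)\}<\text{height}(a^{*})$ and therefore between $a^{*}$ and the point $a^{*}b^{*}\cap uv$, which puts it on segment $a^{*}b^{*}$ itself.

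I would then build the matching incrementally. In each round I take the current highest unused point $a^{*}\in A$, pair it with an unused point $b^{*}\in B$, and look for two unused points $p,q\in A$ on opposite sides of the line through $a^{*}b^{*}$. When such a pair exists, I add both $a^{*}b^{*}$ and $pq$ to the matching, gaining two edges from four $A$-points plus one $B$-point. When no such pair exists, the round still contributes the edge $a^{*}b^{*}$, but the points of $A$ on the deficient side of $a^{*}b^{*}$ are effectively wasted. A careful amortized analysis over all rounds yields the 5-to-1 ratio needed for the middle bound $(a+3b+2)/5$, while the process inserts at most $b$ intra-$A$ edges (one per anchor), which is exactly what saturates the bound at $1+2b$ in the last regime.

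The main obstacle is making this amortization precise. After removing several highest points and their $B$-partners, the surviving points of $A$ can distribute very unevenly across the next line $a^{*}b^{*}$, and one has to show that such imbalance cannot persist for more than a bounded number of consecutive rounds, so that the average cost per added intra-$A$ edge is at most five unused points of $A$. Once this combinatorial accounting is settled, the algorithmic claim is routine: sorting $A$ by height at the start costs $O(a\log a)$ time, and each round can be implemented via linear-time $k$-selection together with orientation tests, so the overall running time stays $O(1+a\log a)$.
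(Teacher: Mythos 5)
Your geometric observation (that $pq$ meets the segment $a^*b^*$ when $a^*$ is the highest point of $A$ and $p,q$ lie on opposite sides of the line through $a^*b^*$) is correct, and the first regime $1+b$ is fine. The genuine gap is the amortization you defer to: the claim you need --- that the one-sided situation ``cannot persist for more than a bounded number of consecutive rounds'' --- is false. Take $B$ to be a small cluster far below $uv$ and place $A$ in convex position along a steep, slightly convex chain leaning away from $B$, all of $A$ inside the cone from $B$ through $uv$ so that the hypothesis of \cref{lem:segment} holds. Then for the current highest point $a^*$ and \emph{every} $b^*\in B$, all lower points of $A$ lie strictly on one side of the line through $a^*b^*$, and this remains true after each round, so no round ever finds a crossing pair; anchoring at a non-extreme point of the chain does not help, because then your height argument only shows that $pq$ meets the \emph{line} through $a^*b^*$, possibly above $a^*$ and off the segment. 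On such inputs your procedure outputs only $1+b$ edges, whereas for $2b+3\le a\le 7b+3$ the lemma requires $(a+3b+2)/5>1+b$; your claim that the process ``saturates at $1+2b$'' in the last regime likewise presupposes that every round succeeds.

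What is missing is a second connected matching of a completely different kind, one that does not touch $uv$ or $B$ at all: when the unused part $A'$ of $A$ is in convex position (exactly the situation in the bad example above), the antipodal pairs along $CH(A')$ pairwise cross, giving a connected matching of size $\lfloor|A'|/2\rfloor$ inside $A$ alone. The paper's proof runs an extraction loop similar in spirit to yours --- repeatedly finding a point of $A$ inside a triangle of three other points of $A$, spending three points of $A$ and one of $B$ to gain two edges --- but when the loop stalls, the remainder of $A$ is in convex position, and the proof then takes the better of the anchored matching (size $1+b+k$ after $k$ successful rounds) and this intra-$A$ matching (size at least $(a-3k-1)/2$); balancing these two bounds over $k$ is precisely where $(a+3b+2)/5$ comes from. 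Without such a fallback, a scheme in which every intra-$A$ edge must cross an $A$--$B$ anchor cannot reach the middle regime, so this is a missing structural ingredient rather than a book-keeping detail.
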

\begin{proof}	
	We first make two easy observations that will come in handy to follow
	the discussion:
	\begin{itemize}
		\item[(a)] A matching of $B$ onto $A$ with $b$ edges together
			with the edge $uv$ to ``connect'' them is
			a connected matching of size $b+1$. We want to improve upon
			this when the sides are unbalanced, in particular when $a$
			is larger than $2b\pm O(1)$.
		\item[(b)] If $A$ has a large subset $A'$ in convex position,
			then we can get a connected matching of size
			$\lfloor\frac{|A'|}{2}\rfloor$, for example
			by connecting ``antipodal'' points along the boundary of $CH(A')$.
	\end{itemize}
	
	We construct a connected matching $M$ iteratively as follows.
	At the start we add $uv$ to~$M$.
	While $|A|>|B|>0$ and $A$ has four points
	$p_0,p_1,p_2,q$ such that $q$ is in the interior of 
	$\triangle(p_0p_1p_2)$,
	we take an arbitrary point $r\in B$, add the edge $qr$ to $M$,
	and add to $M$ the edge $pp'$ of $\triangle(p_0p_1p_2)$ crossed by $qr$.
	See \cref{fig:segment}, right.
	Note that $\{ pp',uv,qr\}$ is a connected matching.
	Then we remove $p,p',q$ from $A$, and $r$ from $B$.
	With each repetition of this operation, we increase the size
	of the matching by two, remove three points from $A$, and remove
	a point from $B$. We repeat this operation until
	$B$ is empty, $|A|\le |B|$, or $A$ is in convex position, whatever
	happens first. 
	Let $k$ be the number of repetitions of this operation,
	let $A'$ and $B'$ be the subsets of $A$ and $B$, respectively,
	that remain at the end.
	Therefore, $M$ currently is a connected matching with $1+2k$ edges,
	$A'$ has $a-3k$ points, and $B'$ has $b-k$ points.
	
	We now consider the different conditions that hold at the end:
	\begin{description}
	\item[Condition (1)] If we finish because $B'$ is empty, then $k=b$ and 
		the matching $M$ has $1+2b$ edges. This scenario can happen
		only when $a\ge 3b$, because otherwise we run out
		of points of $A$ earlier.
	\item[Condition (2)] If we finish because $|A'|\le |B'|$, 
		we match the remaining points of $A'$ to $B'$ arbitrarily 
		and add those $|A'|$ edges to $M$; 
		since they cross $uv$, $M$ keeps being a connected matching.
		Because the cardinality of $A$ decreases at steps of size $3$
		and the cardinality of $B$ decreases at steps of size $1$,
		this means that $|A'|\le |B'|\le |A'|+1$,
		which implies that $a-3k\le b-k\le a-3k +1$, or equivalently,
		we have $a-2k\le b \le a-2k+1$.
		From this, because $k$ is an integer, we obtain that 
		$k=\lceil (a-b)/2\rceil$.
		The size of the connected matching $M$ is now 
		$1+2k+(a-3k) = 1+a - k = 1+ a- \lceil (a-b)/2\rceil =1+ \lfloor (a+b)/2\rfloor$.
		This scenario can happen for any $3b\ge a \ge b$. 
	\item[Condition (3)] If we finish because $A'$ does not have any $4$ points 
		with the desired condition, the key observation is to note that
		$A'$ is in convex position. (This is also true if $|A'|\le 3$.)
		We consider two connected matchings and take the best of both. 
		
		The first matching is obtained
		by adding to $M$ a matching between all the vertices of $B'$
		and any subset of $A'$ with $|B'|$ points. 
		The second matching, which we denote by $M'$, is obtained
		by taking a connected matching of the points $A'$, that
		is in convex position. Note that this is a matching \emph{within} $A$.
		We take the larger matching of $M$ and $M'$.
		
		The connected matching $M$ has size $1+2k+ (b-k) =1+b+k$.
		The other connected matching, $M'$, has $\lfloor\frac{|A'|}{2}\rfloor = 
		\lfloor\frac{a-3k}{2}\rfloor \ge \frac{a-3k-1}{2}$ edges. 
		Therefore, in this outcome we get a connected matching of size 
		\[
			\max\left\{ 1+b+k, \frac{a-3k-1}{2}  \right\}.
		\]
		The first term increases with $k$, the second term decreases with $k$,
		and the two terms are equal when $k$ takes the value $k_0:=(a-2b-3)/5$.
		At $k=k_0$ the expression takes the value $(a+3b+2)/5$.
		However, because in each step we remove $3$ points from $A$ and $1$ point
                from $B$, we have some additional constraints, as follows.
		\begin{align*}
			k\le a/3 &:~~~~~~ k_0\le a/3 ~~~\Longleftrightarrow~~~ 2a\ge -6b-9, \text{ always true}.\\
			k\le b&:~~~~~~ k_0\le b ~~~\Longleftrightarrow~~~ a\le 7b+3.\\
			k\ge 0&:~~~~~~ k_0\ge 0 ~~~\Longleftrightarrow~~~ a\ge 2b+3.
		\end{align*}
		Therefore, if $a<2b+3$, then $k_0<0$ and 
		the maximum is always attained at the function $1+b+k$, which in
		the worst case takes value $1+b$.
		If $a > 7b+3$, then $k_0>b$ and for all valid values of $k$ ($k\le b$)
		the maximum is given by $\frac{a-3k-1}{2}$; its minimum value is
		at $k=b$, giving $\frac{a-3b-1}{2}$.
		Summarizing this outcome, we get a connected matching whose size is
		bounded from below by the following function
		\[
		            \tilde{m}(a,b):=
			\begin{cases}
				1+b & \text{if $b\le a\le 2b+3$},\\
				(a+3b+2)/5, & \text{if $2b+3 \le a \le 7b+3$},\\
				(a-3b-1)/2, & \text{if $a \ge 7b+3$}.
			\end{cases}
		\]
		Note that this function is ``continuous'' at the boundary cases,
		which is a ``good indication''.
	\end{description}
	
	Since we have given a construction that can finish with $3$ 
	different conditions, we have to consider the worst case among 
	those scenarios, and show that in each case
	$m(a,b)$ is a lower bound on the size of the connected matching.
		
	We first compare the outcomes under Conditions (1) and (3) and
	see that actually $m(a,b)$ describes the worst case among them.
	\begin{align*}
		\text{If $b\le a<2b+3$}:&~~~~~~ 1+b \le 1+2b \text{~~~always}\\
		\text{If $2b+3 \le a \le 7b+3$}:&~~~~~~ \frac{a+3b+2}{5} \le 1+2b
					~~~\Longleftrightarrow~~~ a \le 7b+3\\
		\text{If $a \ge 7b+3$}:&~~~~~~ \frac{a-3b-1}{2} \ge 1+2b
					~~~\Longleftrightarrow~~~ a \ge 7b+3.
	\end{align*}
        It remains to compare the outcome under Condition (2) and $m(a,b)$;
	we will see that the worst case is never in this outcome.
	In some cases we compare against $(a+b+1)/2 \le 1+\lfloor (a+b)/2 \rfloor$,
	as it suffices and it is easier to manipulate.
	\begin{align*}
		\text{If $b\le a<2b+3$}:&~~~~~~ 1+b\le 1+\left\lfloor \frac{a+b}{2} \right\rfloor 
			~~~\Longleftrightarrow~~~ a\ge b \text{~~~always}\\
		\text{If $2b+3 \le a \le 7b+3$}:&~~~~~~ \frac{a+3b+2}{5} \le \frac{a+b+1}{2}
					~~~\Longleftrightarrow~~~ b\le 3a+1 \text{~~~always}\\
		\text{If $a \ge 7b+3$}:&~~~~~~ 1+2b \le \frac{a+b+1}{2}
					~~~\Longleftrightarrow~~~ a \ge 3b+1.
	\end{align*}
	We conclude that $m(a,b)$   indeed gives a lower bound on the size
	of a connected maximum matching.
	
	\begin{figure}[htb]\centering
		\includegraphics[page=11,width=\textwidth]{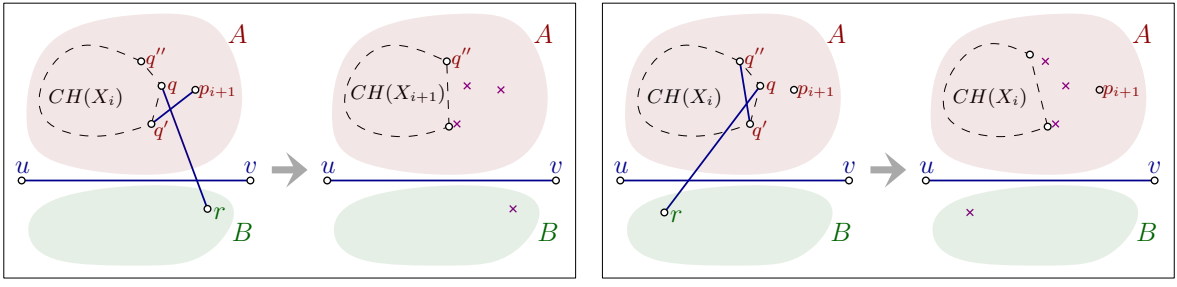}
		\caption{Algorithmic part of \cref{lem:segment}. Two cases
			may arise when inserting $p_{i+1}$: we may match $p_{i+1}$
			to $B$ (left) or we may match the other three points
			that together with $p_{i+1}$ form a non-convex $4$-tuple.
			Crosses denote points that are deleted.}
		\label{fig:segment2}
	\end{figure}

	It remains to discuss the \emph{algorithmic claim}.
	We only discuss the case of $a\ge 1$. Since $a\ge b$, we have $O(a)$ points
	in total. The proof is constructive and most of it is just simple book keeping
	of the sizes of the sets. The only complicated aspect of the algorithm
	is finding the $4$ points of $A$ that are not in convex position,
	or recognize that $A$ is in convex position.
	For this we employ an incremental algorithm to compute $CH(A)$
	by adding the points by increasing $x$-coordinate.
	Let $p_1,\dots,p_a$ be the points of $A$ sorted by increasing $x$-coordinate.
	For each index $i$, let $A_i$ be the prefix $\{p_1,\dots, p_i\}$.

	We maintain a connected matching $M$ and a subset $X_i\subseteq A_i$ 
	such that: (i) $X_i$ is in convex position, and
	(ii) $A_i\setminus X_i$ are endpoints of the connected matching we maintain. 
	This means that $X_i\cup \{ p_{i+1},\dots, p_a\}$ is the set of points $A$
	maintained through the iterations of the constructive proof.
	For $X_i$ we maintain its convex hull, $CH(X_i)$, as a linked list with a finger 
	to its rightmost point; in general, $p_i$ is \emph{not} the rightmost point of $X_i$
	because it may have been matched.
	
	When we add the next point, $p_{i+1}$, we compute $CH(X_i\cup \{p_{i+1}\})$ from $CH(X_i)$.
	If in the process we do not delete any point of $X_i$, meaning that all points
	of $X_i\cup \{p_{i+1}\}$ are extremal, we just set $X_{i+1}=X_i\cup \{p_{i+1}\}$,
	move the finger to $p_{i+1}$ because it is the rightmost point of $X_{i+1}$, 
	and move to the next point, $p_{i+2}$.
	If in the process we delete some point $q$ or $X_i$, let $q'$ and $q''$ 
	the neighbors of $q$ along the boundary of $CH(X_i)$. 
	The triangle $\triangle(p_{i+1}q'q'')$ contains $q$ in its interior. 
	See \cref{fig:segment2}. 
	In this case we make once the operation described in the constructive proof:
	select any point $r$ from $B$, add to the matching $qr$ and 
	the edge $e$ of $\triangle(pq'q'')$ it crosses. 
	Now we have to remove the points $q$ and two other points of $e$. 
	For this we undo the changes we made to $CH(X_i)$, so that we get $CH(X_i)$ back.
	We remove the points of $\{ q,q',q'' \}\cap X_i$ that were matched, 
	which takes constant time;
	we may have to update the finger to the point with largest $x$-coordinate.
	If $p_{i+1}$ is to be removed because it was matched, 
	we have finished and move to the next point, $p_{i+2}$; this
	is the case in the left of \cref{fig:segment2}.
	Otherwise, we try to reinsert $p_{i+1}$ again, which may trigger another
	iteration adding another two edges to the matching;
	this is the case on the right of \cref{fig:segment2}.
	
	After sorting the points of $A$, we spend $O(1)$ time per point, if we do not
	add any edge to the matching, and $O(1)$ time per edge added to the matching.
	Therefore, in total we spend $O(a\log a)$ time.

\end{proof}

Note that the bound $m(a,b)$ of \cref{lem:segment}
is monotone increasing in $a$ and in $b$, also when we take $a$ and $b$
as real values (with $b\le a$ always.)
Moreover, when $a+b$ remains constant, then $m(a,b)$ is larger
for larger $b$. This means $m(a,b)\le m(a-1,b+1)$ whenever $b\le a-2$.

\begin{theorem}\label{thm:uncolored}
	Let $P$ be a set of $n\ge 2$ points in general position in the plane. 
	Then $P$ has a connected matching of size at least $(5n+1)/27$
	which can be computed in $O(n\log n)$ time.
\end{theorem}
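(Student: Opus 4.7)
My plan is to combine \cref{thm:separatingpath} and \cref{lem:segment}: obtain a short separating path of $P$ in linear time, then use one of its edges as the witness segment in the lemma.

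\textbf{Length-1 case.} If \cref{thm:separatingpath} returns a single edge $\pi=uv$, both endpoints are extremal in $P$, so $uv$ is a chord of $CH(P)$ and every segment between the two open halfplanes of $\ell(uv)$ crosses $uv$. Letting $A$ and $B$ be the intersections of those halfplanes with $P\setminus\{u,v\}$, the separating property gives $|A|,|B|\ge k:=\lceil(n-4)/3\rceil$ and $|A|+|B|=n-2$. Assuming $|A|\ge|B|$, the inequality $|A|\le 2|B|+3$ follows from $|B|\ge(n-4)/3$, so \cref{lem:segment} applies in its first regime $m(a,b)=1+b$ and yields a connected matching of size at least $1+k\ge(n-1)/3\ge(5n+1)/27$ (with the bound vacuous for $n\le 5$, where even the empty matching suffices).

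\textbf{Length-2 case: the main obstacle.} If $\pi$ is a $2$-edge path $u-q-v$ with $q$ internal, I would try to use one of the two edges $uq$ or $qv$ as the witness in \cref{lem:segment}. A naive choice of $A,B$ as the two full open halfplanes of $\ell(uq)$ does not satisfy the hypothesis of the lemma, because a cross-line segment may hit $\ell(uq)$ outside $[u,q]$ and therefore miss the segment $uq$. To remedy this I would restrict $A$ and $B$ to the infinite strip perpendicular to $\ell(uq)$ bounded by the two perpendicular lines through $u$ and $q$; by convexity of that strip, every pair with one endpoint in each half of the strip crosses $uq$ itself, so \cref{lem:segment} applies and returns a connected matching of size $m(|A|,|B|)$. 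The analogous construction works with $qv$, and the algorithm outputs whichever of the two matchings is larger.

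\textbf{Where the bound comes from.} The remaining work is to show that at least one of the two witness choices produces sizes $|A|,|B|$ with $m(|A|,|B|)\ge(5n+1)/27$. For this I would use the extra structural information from the proof of \cref{thm:separatingpath}, namely that $q$ lies inside a triangle $\triangle(p_0p_1p_2)$ whose vertices are extremal in $P$ and that each of the three regions cut off at $q$ contains at most $\lceil(2n-8)/3\rceil$ points. A case analysis on which vertex $p_j$ is opposite the chosen path (distinguishing whether $p_{j-1}p_{j+1}$ is a chord or an edge of $CH(P)$, since in the chord case $uv$ itself can serve as an auxiliary witness) forces the worst configuration into the middle regime $m(a,b)=(a+3b+2)/5$ of \cref{lem:segment}; balancing the two terms at the boundaries of this regime produces exactly the claimed $(5n+1)/27$. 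All steps except the final call to \cref{lem:segment} are linear, so the total running time is $O(n\log n)$.
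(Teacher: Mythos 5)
Your length-1 case is essentially the paper's own argument and is fine. The problem is the length-2 case, and the gap there is real. The paper's route is different in a crucial way: it first builds a \emph{maximal} matching $M_1$ of $A$--$B$ edges crossing $e_1$ (this is exactly what \cref{lem:maximal} is for); by maximality, every segment between the leftover sets $A_2=A\setminus A_1$ and $B_2=B\setminus B_1$ cannot cross $e_1$, and since it must cross the path $\pi$, it crosses $e_2$. Hence \cref{lem:segment} applies to $e_2$ with \emph{all} of $A_2,B_2$ --- no points are discarded --- and the constant $(5n+1)/27$ comes out of the dichotomy ``either $|M_1|\ge (5n+1)/27$ and we are done, or $|B_2|\ge (4n-10)/27$, in which case $m(a,b)$ evaluated in the middle regime at $a=(13n-19)/27$, $b=(4n-10)/27$ gives exactly $(5n+1)/27$.''

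Your replacement for this step does not work as stated. The perpendicular-strip restriction over $uq$ does guarantee that every retained cross pair meets the segment $uq$ itself, but you have no lower bound on how many points of either side lie in that strip: the strips over $uq$ and over $qv$ need not cover the plane (if the two edges meet at, say, a right angle, two large wedges avoid both strips), so \emph{both} witness choices can leave you with $a$ or $b$ much too small, and nothing in the structural data you invoke from \cref{thm:separatingpath} (the counts of points in the three regions around the splitting point $q$) controls membership in a perpendicular strip. The final case analysis that is supposed to ``force the middle regime'' and balance to $(5n+1)/27$ is only asserted; in the paper that constant is produced by the maximal-matching dichotomy above, for which your argument has no analogue. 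So the missing idea is precisely \cref{lem:maximal} (or some other device that assigns every cross pair to one of the two path edges without throwing points away); without it, the length-2 case of your proof does not go through.
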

\begin{proof}
	By \cref{thm:separatingpath} we know that there is 
	a $\lceil\frac{n-4}{3}\rceil$-separating path $\pi$ of length 1 or 2 for~$P$. 
	Let $A$ and $B$ be the sets of points of $P$ on each side of $\pi$, 
	such that $|A|\ge |B|$. Note that the vertices of $\pi$ 
	belong neither to the set $A$ nor to the set $B$, which means that $n-3\le |A|+|B| \le n-2$.
	Therefore we have
	\[
		\left\lceil\frac{n-4}{3}\right\rceil ~\le~ |B| ~\le~ |A| 
		~\le~ n-3-\left\lceil\frac{n-4}{3} \right\rceil 
		~=~ \left\lfloor\frac{2n-5}{3} \right\rfloor.
	\]
	Each edge connecting a point of $A$ to a point of $B$ crosses $\pi$.	

	If $\pi$ consists of a single edge $e$, then we match all points of 
	$B$ to points of $A$ arbitrarily, and include $e$ also 
	in the matching.
	Since all these edges intersect $e$, they form a connected matching
	of size $1+|B| \ge \lceil\frac{n-1}{3}\rceil \ge \tfrac{5n+1}{27}$. 
	(This last inequality holds for $n\ge 2$.)

	For the remainder of this proof we assume that $\pi$ has length two, 
	and denote its edges by $e_1$ and $e_2$. 
	We build a \emph{maximal} matching $M_1$ from 
	$B_1\subseteq B$ to $A_1\subseteq A$ with edges that cross~$e_1$. 
	This means that $|A_1|=|B_1|$ and there is no point in $A\setminus A_1$
	that can be connected to a point in $B\setminus B_1$ by crossing $e_1$.
	Set $A_2= A\setminus A_1$ and $B_2= B\setminus B_1$;
	Each segment connecting a point in $A_2$ to a point of $B_2$ 
	must cross $e_2$ because it does not cross $e_1$.
	We make an arbitrary matching $M_2$ connecting each point of $B_2$ 
	to points of $A_2$; this can be done because 
	$|B_2|=|B|-|M_1|\le |A|-|M_1|= |A_2|$.
	We add $e_1$ to $M_1$ and $e_2$ to $M_2$ so that 
	$M_1$ and $M_2$ become connected matchings with $|M_1|+|M_2| = 2+|B|$.
	
	If $M_1$ or $M_2$ has size at least $\tfrac{5n+1}{27}$, then we are done.
	Therefore, we can restrict our attention to the case
	when $|M_1|, |M_2| \le \tfrac{5n+1}{27}$.
	Since $|A_1|=|B_1|=|M_1|-1 \le \tfrac{5n-26}{27}$, we have	
	\begin{align*}
		|B_2|~&=~|B|-|B_1| ~\ge~ \left\lceil\frac{n-4}{3}\right\rceil - \frac{5n-26}{27}
			~\ge~  
			\frac{4n-10}{27}.
	\end{align*}
	
	We apply \cref{lem:segment} to the segment $e_2$ with $A_2$ and $B_2$
	to get a connected matching, where $a=|A_2|$ and $b=|B_2|$.
	Since the lower bound $m(a,b)$ of \cref{lem:segment}  
	is monotone increasing in $b$, even when $a+b$ is fixed, 
	we get a worst-case lower bound by evaluating it at 
	\begin{align*}
		b ~&:=~ \frac{4n-10}{27} \le |B_2| \\
		a ~&:=~ \frac{13n-19}{27} ~=~ (n-3) - 2\cdot \frac{5n-26}{27} - \frac{4n-10}{27} \\
			&\phantom{:=~ \frac{13n-19}{27}}~~\le~ (n-3) - |A_1|-|B_1|- b ~=~ |A_2| + |B_2| -b,
	\end{align*} 
	because $a+b\le|A_2|+|B_2|$. Note that for this choice of $a$ and $b$ 
	we indeed have $b\le a$ for $n\ge 2$.
	To evaluate the function $m(a,b)$ of \cref{lem:segment}, 
	the values $a=\frac{13n-19}{27}$ and $b=\frac{4n-10}{27}$ fall 
	in the regime $2b+3\le a \le 7b+3$, when $n\ge 16$, because
	\[
		2b+3 ~=~ \frac{8n+61}{27} ~~~~~\stackrel{\mathclap{16\le n}}{\le}~~~~~ \frac{13n-19}{27} ~=~ a 
			~\le~ \frac{28n+11}{27} ~=~ 7b+3.
	\]
	In this case, when $n\ge 16$, we obtain the worst-case lower bound
	\[
		\frac{a+3b+2}{5} ~=~ 
		\frac{1}{5} \cdot \left(\frac{13n-19}{27} + 3\cdot \frac{4n-10}{27} +2\right)
			~=~ \frac{1}{5} \cdot \frac{25n+5}{27} \ge \frac{5n+1}{27}.
	\]
	
	For $2\le n\le 15$, we have to evaluate the function $m(a,b)$ of 
	\cref{lem:segment} in the regime $b\le a\le 2b+3$,
	and the lower bound we obtain is
	\[
		1+b ~=~ 1+ \frac{4n-10}{27} ~=~ \frac{4n+17}{27} 
		~~~~~~\stackrel{\mathclap{16\ge n}}{\ge}~~~~~~ \frac{5n+1}{27}.
	\]
	This covers all options for $n$ and concludes the proof of the lower bound $(5n+1)/27$.
	
	It remains to discuss the \emph{algorithmic claim}.
	The computation of the separating path via \cref{thm:separatingpath}
	takes $O(n)$ and the computation of the maximal matching takes $O(n\log n)$
	using \cref{lem:maximal}. \cref{lem:segment} takes $O(n\log n)$ time.
	The remaining tasks are simple book keeping of cardinalities of sets.
\end{proof}

\subsection{Sets with deep points}
\label{sec:deep}
We define the \emph{depth} $d(p)$ of a point $p \in P$ as the minimum number of points 
that need to be removed from $P$ so that $p$ lies on the boundary of the 
convex hull of the remaining points. This implies that any line through 
$p$ has at least $d(p)$ points of $S$ on both of its sides.
If the set $P$ of points is in convex position, then $d(p)=0$ for all $p\in P$.
However, for some point sets in general position, we may have some point at depth $(n-2)/2$.
	
\begin{theorem}
\label{thm:deep}
	Let $P$ be a set of $n$ points in general position in the plane 
	and let $p$ be a point of $P$ with the largest depth $d(p)$ in $P$. 
	Then $P$ has a connected matching of size at least~$d(p)$.
\end{theorem}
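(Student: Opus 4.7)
The plan is to give a direct construction of the matching. We sort the points $q_1, \ldots, q_m$ of $P \setminus \{p\}$, where $m = n-1$, by angle around $p$, and note that the depth condition $d(p) = d$ is equivalent to the statement that every open arc of angular length $\pi$ around $p$ contains at least $d$ of the $q_i$; in particular, $\theta_{i+d} < \theta_i + \pi$ for all indices $i$ modulo $m$.

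The key step is to select a subset $S \subseteq P \setminus \{p\}$ of exactly $2d$ points that is in convex position and in which $p$ itself has depth $d$; since $|S|=2d$, this forces every line through $p$ to have exactly $d$ points of $S$ on each side. Given such $S = \{a_1, \ldots, a_{2d}\}$ listed in cyclic order around $p$ (which, by convex position, coincides with the boundary cyclic order of $CH(S)$), we take as matching $M = \{a_i a_{i+d} : i = 1, \ldots, d\}$, obtained by pairing opposite vertices of the convex $2d$-gon $CH(S)$. Using the tight depth-$d$ condition on $S$, the four endpoints of any two pairs $(a_i, a_{i+d})$ and $(a_j, a_{j+d})$ with $i < j \leq d$ have angular span exceeding $\pi$ around $p$, so $p$ lies in the interior of the convex quadrilateral they form, and the two matching edges cross as its two diagonals. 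Hence $M$ is a crossing family of size $d$, and in particular a connected matching.

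The main obstacle is the existence of the subset $S$. The naive choice of the first $d$ and last $d$ points in the angular order---pairing $q_i$ with $q_{m-d+i}$---does not always work: a point lying close to $p$ can fall in the interior of the triangle of the other three points of some quadruple, breaking the diagonal argument. To produce a valid $S$, we plan to start with the vertices of the outermost convex layer of $P\setminus\{p\}$, which already contains $p$ in its interior since $d\geq 1$, and, whenever the current polygon has fewer than $2d$ vertices, iteratively augment it with points from the deeper convex layers while maintaining convex position and preserving the tight depth of $p$ inside the current selection; the depth-$d$ condition on $p$ in $P$ provides the combinatorial slack needed for such substitutions to succeed. Once the existence of $S$ is established, the opposite-pair argument above yields the desired crossing family of size $d$.
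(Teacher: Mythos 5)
Your reduction step is fine as far as it goes, but it is worth noting that it needs less than you ask for and that what you ask for is literally unattainable: if $S$ consists of $2d$ points in convex position with $p$ in the interior of $CH(S)$, then the cyclic order of $S$ around $p$ coincides with the boundary order of $CH(S)$, and the $d$ ``opposite'' chords of a convex $2d$-gon pairwise cross because their endpoints alternate along the boundary; no depth condition on $S$ is needed at all. On the other hand, the condition you impose --- that $p$ have depth exactly $d$ inside a set of $2d$ points --- can never hold in general position: the line through $p$ and any point of $S$ leaves only $2d-1$ points of $S$ off the line, so one of its open sides contains at most $d-1$ of them, forcing the depth of $p$ in $S\cup\{p\}$ to be at most $d-1$.

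The genuine gap is the existence of $S$, i.e.\ of $2d$ points of $P\setminus\{p\}$ in convex position, and this is where the argument breaks irreparably: the statement is false in general. The layer-merging sketch cannot be made to work, since every point of a deeper convex layer lies strictly inside the hull of the outer layer and hence can never be added ``while maintaining convex position'' without discarding outer vertices, and no accounting of that trade-off is given. Worse, no accounting can succeed: place a point $p$ at the origin and three tiny, far-away clusters in directions $120^\circ$ apart, each cluster a scaled copy of an Erd\H{o}s--Szekeres-type set of $m$ points containing no long caps or cups. Every open halfplane bounded by a line through $p$ contains one entire cluster, so $d(p)\ge m\approx n/3$; but any subset in convex position meets each tiny cluster in $O(\log m)$ points (its vertices inside a far-away cluster split into $O(1)$ caps or cups), so the largest convex-position subset has size $O(\log n)\ll 2d$. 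Your plan in fact tries to prove something much stronger than \cref{thm:deep} --- a crossing family of size $d(p)$, i.e.\ a clique in the intersection graph --- and this detour through convex position is exactly what fails. The paper's proof only builds a star: it takes the edge $pa$ with $a$ an extreme point of $P$, labels the points above the line $pa$ as $v_1,v_2,\dots$ and below as $u_1,u_2,\dots$ in angular order around $p$ starting from $a$, and uses the depth condition to show that each segment $u_iv_{d(p)-i}$ crosses $pa$; these $d(p)-1$ segments together with $pa$ already form a connected matching of size $d(p)$. If you weaken your goal from ``all edges pairwise cross'' to ``all edges cross one fixed edge incident to $p$,'' your angular-order setup can be salvaged along these lines.
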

\begin{proof}
	Let $\ell$ be a line that passes through $p$ and an arbitrary extremal 
	point $a$ of~$P$. Without loss of generality we may assume that $pa$
	is horizontal with $a$ to the right of $p$.
	The edge $e=pa$ is our first matching edge and we will 
	construct additionally $d(p)-1$ matching edges that all intersect $e$.
	See \cref{fig:deep}.

	\begin{figure}[htb]\centering
		\includegraphics[page=13,scale=1]{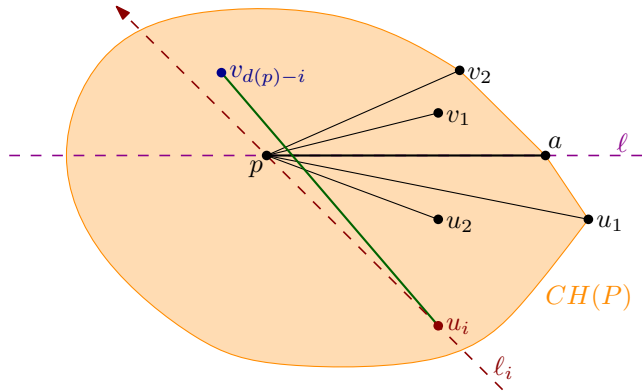}
		\caption{Proof of \cref{thm:deep}. The segment $u_iv_{d(p)-i}$ intersects
			$ap$ for each $i$ with $1\le i<d(p)$.}
		\label{fig:deep}
	\end{figure}

	Label the points strictly above $\ell$ as $v_1,\dots, v_k$
	in counterclockwise order around $p$, from $a$ onwards.
	Note that $k\ge d(p)$.
	Label the points strictly below $\ell$ as $u_1,\dots, u_{k'}$
	in clockwise order around $p$, from $a$ onwards. Note that $k'\ge d(p)$,
	Let $\ell_i$ be the oriented line passing through $u_i$ and then $p$.
	For each $i$ with $1\le i< d(p)$, the points $v_1,\dots, v_{d(p)-i}$
	are to the right of $\ell_i$. This is so because otherwise
	to the right of the $\ell_i$ we would have  
	$\{a,u_1,\dots, u_{i-1}\}$ and a subset of $\{v_1,\dots, v_{d(p)-i-1}\}$,
	which in total has $1+(i-1)+d(p)-i-1=d(p)-1$ points,
	contradicting the fact that $p$ is at depth $d(p)$.
	Moreover, because $v_{d(p)-i}$ is to the right of $\ell_i$ and $a$
	is an extreme point of $CH(P)$, the segment $u_iv_{d(p)-i}$ intersects
	the segment $e=pa$.
	It follows that the segments $u_1v_{d(p)-1}, u_2v_{d(p)-2}, \dots, u_{d(p)-1}v_{1}$
	together with $e=pa$ form a connected matching of size $d(p)$.
\end{proof}
	
Note that the bound is tight for four points, when one of them is in the interior of the convex hull.

\section{Lower bound for colored sets}
\label{sec:colored}

For this section, $P$ denotes a set of $n$ points in general position in the plane
with a balanced $c$-coloring. This means that each of the $c$ color classes
has roughly $n/c$ points.
To avoid carrying floors and ceilings, which make the computation more cumbersome,
in our results we will not optimize additive constants.

For colored sets we will prove our lower bounds using separating paths, as in the 
uncolored case.  The main difference is that we want that each edge of the separating
path connects points with different colors, as otherwise they can not be used as matching edges. 
For this, we say that a \emph{polychromatic $k$-separating path} is a $k$-separating path
where each edge of the path connects points with different colors.
To show the existence of polychromatic $k$-separating path, for a suitable $k$, 
we use \cref{thm:split-triangle} in such a way that there are enough
candidate points to split the triangle into the required weighted subtriangles.
A sufficiently large number of points allow us to have flexibility of choosing the 
color of the points in the separating path. 
 
We start by showing colored variants of \cref{thm:separatingpath}. We provide two
results, each of them better for a different range of $c$.

\begin{lemma}\label{lem:separatingpath_colors1}
	For $c\ge 4$ and sufficiently large $n$, 
	there exists a polychromatic $\left(\frac{(c-3)n}{3c}-3\right)$-separating path 
	for $P$ of length $1$ or $2$.
	Such a separating path can be found in time linear in~$n$.
\end{lemma}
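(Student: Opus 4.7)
The plan is to adapt the proof of \cref{thm:separatingpath} with the weaker target separation $k=\frac{(c-3)n}{3c}-3$. This is substantially smaller than the uncolored value $\lceil (n-4)/3\rceil$, and the slack is converted, via \cref{thm:split-triangle}, into a supply of candidate splitter points, among which we can pick one whose color avoids the colors of the three triangle vertices.

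First I would fix an extremal point $q_0$ and sort $P\setminus\{q_0\}$ angularly as $q_1,\ldots,q_{n-1}$. Exactly as in the proof of \cref{thm:separatingpath}, if some extremal point $q_j$ with $k<j<n-k$ has a color different from that of $q_0$, then $q_0 q_j$ is already a polychromatic $k$-separating path of length $1$. Otherwise I would construct a length-$2$ path through a triangle $\triangle(q_0 q_a q_b)$ with $q_a, q_b$ extremal and with $m\ge n-2k-1$ interior points. In the standard subcase $q_aq_b$ is the edge of $CH(P)$ hit by both rays $q_0 q_k$ and $q_0 q_{n-k}$; in the monochromatic-cone subcase, where every extremal point in the cone shares $q_0$'s color, I would instead take $q_a, q_b$ to be suitable extremal points flanking the cone, arguing via a local sweep that the resulting triangle still has at least $n-2k-1$ interior points.

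Next I would apply \cref{thm:split-triangle} to this triangle with weights $w_0=r$, $w_1=r-(n-b-1)$, $w_2=r-(a-1)$, for $r$ slightly below $\frac{(2c+3)n}{3c}$. This choice ensures that every part $A_i$ contains at most $r\le n-3-k$ points, so the large side of the final path has at least $k$ points, and that the number of candidate inner points $\ell-2m+3$ is at least $3n/c+\Theta(1)$. Since the coloring is balanced, at most $3\lceil n/c\rceil\le 3n/c+3$ points of $P$ share a color with any of $q_0,q_a,q_b$, so at least a positive constant number of candidates have color outside these three. Picking any such $q$, the path $\pi=p_{j-1}\,q\,p_{j+1}$ is polychromatic for whichever part $A_j$ ends up being the chosen side $B$, because the color of $q$ differs from both of its neighbors. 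Pigeonhole gives $|B|\ge \lceil (n-3)/3\rceil\ge k$, so $\pi$ is $k$-separating.

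The main obstacle I anticipate is the monochromatic-cone fallback: showing that when every cone-extremal point has $q_0$'s color, one can still pick flanking extremal points $q_a, q_b$ whose associated triangle has at least $n-2k-1$ interior points. This should follow because the angular slab between such $q_a$ and $q_b$ still contains essentially all the points of the original cone, with only a few pushed into the cap outside the segment $q_a q_b$, affecting only the additive constant. For the algorithmic claim, all steps—angular selection, identifying $q_a,q_b$ via \cref{lem:CH}, invoking \cref{thm:split-triangle}, and scanning the candidate list by color—run in linear time.
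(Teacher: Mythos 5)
Your handling of the main case is essentially the paper's proof: the same angular setup, the same weights $w_0\approx\left(\frac{1}{c}+\frac{2}{3}\right)n$, $w_1=w_0-(n-b-1)$, $w_2=w_0-(a-1)$, the same count of at least $\frac{3n}{c}+O(1)$ candidate splitters from \cref{thm:split-triangle}, and the same observation that at most $3\lceil n/c\rceil$ points share a color with $q_0,q_a,q_b$, which is exactly where $c\ge 4$ is used. (Using the smaller value $k=\frac{(c-3)n}{3c}-3$ for the angular ranks instead of $\lceil\frac{n-4}{3}\rceil$ is harmless.)

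The genuine gap is your monochromatic-cone fallback. When extremal points of $q_0$'s color lie strictly inside the cone and you replace the hull edge by ``flanking'' extremal points $q_a,q_b$, the chord $q_aq_b$ is no longer an edge of $CH(P)$, and the cap between this chord and the hull arc can contain $\Theta(n)$ points of $P$, not $O(1)$: a shallow monochromatic convex chain of extremal points can have arbitrarily many non-extremal points of other colors hiding just inside it. Two things then break. First, $\triangle(q_0q_aq_b)$ may contain far fewer than $n-2k-1$ interior points, so your claimed supply of splitter candidates is not available. Second, and more fatally, the region of the final two-edge path that is bounded by the hull arc from $q_a$ to $q_b$ is no longer controlled by $w_0$: it contains all the cap points in addition to the at most $w_0$ points of the corresponding subtriangle, so if that region turns out to be the most populated part $B$, you cannot conclude that the other side has at least $\frac{(c-3)n}{3c}-3$ points, and the separation guarantee fails. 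The paper avoids this situation entirely: when an extremal point $q_j$ of $q_0$'s color lies in the cone, it does not use the triangle machinery at all, but picks an arbitrary point $q_\ell$ (not necessarily extremal) of a different color whose angular rank lies roughly in $[\frac{n}{4}-1,\frac{3n}{4}+1]$ (such a point exists because the coloring is balanced) and takes the two-edge path $q_0q_\ell q_j$, whose two sides are large directly from the angular ranks; \cref{thm:split-triangle} is invoked only when the cone contains no extremal point, so that $q_aq_b$ is a genuine hull edge and every part of the split is bounded by the chosen weights. Your fallback needs either this idea or a new argument that actually controls the cap.
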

\begin{proof}
    We closely follow the proof of \cref{thm:separatingpath}.
    As it was done there, we set $k= \lceil (n-4)/3 \rceil$, 
	which means that $\frac{n-4}{3} \le k \le \frac{n-2}{3}$,
	and define $q_0, q_1,\dots , q_{n-1}$ by sorting the points radially
	around the point $q_0$ with minimal $y$-coordinate.
	 
	Consider first the case where between the points $q_k$ and $q_{n-k}$ there is 
	an extremal point~$q_j$.
	If $q_0$ and $q_j$ have different colors, then we take $q_0q_j$ as the separating path,
	which is a $k$-separating path for $k\ge \frac{n-4}{3} \ge \frac{(c-3)n}{3c}-3$,
	whenever $c\ge 2$.
	Otherwise, we take a point $q_\ell$ with $\frac{n}{4}-1 \le \ell\le \frac{3n}{4}+1$
	such that $q_\ell$ has a color different than $q_0$ and $q_j$.
	See \cref{fig:separating14}, left.
	Such a point exists because there are at least $(\frac{3n}{4}+1)-(\frac{n}{4}-1) -1 = \frac{n}{2} +1$
	points, and no color class has more than $\frac{n}{2}+1$ points. 
	Note that $q_j$ is also from this interval of points, but as $q_0$ has the same color as $q_j$ 
	there is at least one point with a different color among the $\frac{n}{2} +1$ points.
	Then, the path $q_0q_\ell q_j$ is a $(\frac{n}{4}-2)$-separating path of length $2$:
	it has at least $\ell-1\ge \frac{n}{4}-2$ on one side, and at least $k\ge \frac{n-4}{3}$
	on the other side. Finally, we note that $\frac{n}{4}-2 \ge \frac{(c-3)n}{3c}-3$ for $c\ge 2$.
	(It may be that $q_\ell$ is an extreme point, and therefore $q_0q_\ell q_j$ splits
	$CH(P)$ into three parts, but then two of them are on the same side of the path.)
	
	\begin{figure}[htb]\centering
		\includegraphics[page=14,scale=1]{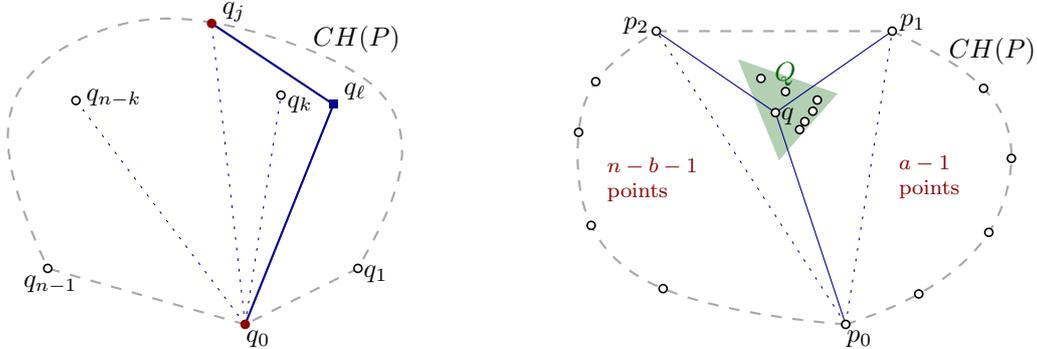}
		\caption{Proof of \cref{lem:separatingpath_colors1}. 
			Left: Schema where $\ell$ satisfies $\frac{n}{4}-1 \le \ell \le \frac{3n}{4}+1 $. 
			Right: All the points in $Q$ give a good enough partition, and $Q$
				has points with at least $4$ colors.}
		\label{fig:separating14}
	\end{figure}	

	Now we turn to the case where between $q_k$ and $q_{n-k}$ there is no extremal
	point. This means that there is an edge $q_aq_b$ of $CH(P)$ such that $q_k$ and $q_{n-k}$
	are in the triangle $\triangle(q_0 q_a q_b)$, which also
	means that $a\le k< n-k\le b$. 
	The triangle $\triangle(q_0 q_a q_b)$ has $m=b-a-1\le n-3$ points in the interior.
	We set $p_0=q_0$, $p_1=q_a$, $p_2=q_b$, 
	\[
		w_0= \left\lceil\left( \frac{1}{c}+\frac{2}{3}\right) n\right\rceil, 
		~~~ w_1= w_0 - (n-b-1), ~\text{ and }
		w_2= w_0 - (a-1).
	\]
	We first note that
	\begin{align*}
		w_0 ~&>~ 0,\\
		w_1 ~&=~ w_0 - (n-b-1) \ge \left( \frac{1}{c}+\frac{2}{3}\right)n - k +1
				> \frac{2n}{3} - \frac{n-2}{3} +1 > 0,\\
		w_2 ~&=~ w_0 - (a-1) \ge \left( \frac{1}{c}+\frac{2}{3}\right)n - k +1
				> 0,
	\end{align*}
	and then we note that
	\begin{align*}
		w_0 + w_1 + w_2 ~&\ge ~ 3 \left( \frac{1}{c}+\frac{2}{3}\right) n - (n-b-1) - (a -1)\\
			&=~ 2n + \frac{3n}{c}- n + b -a +2 \\
			&\ge~ \frac{3n}{c} + n + b-a +2\\
			&\ge~ \frac{3n}{c} + (m+3) + (m+1) +2, \ \text{~~using $m=b-a-1$ and $n-3 \ge m$}\\
			&\ge~ \frac{3n}{c} + 2m + 6.
	\end{align*}
	This means that, using \cref{thm:split-triangle} we get a set $Q\subset P$
	of at least $w_0 + w_1 + w_2 - 2m+3 \ge  \frac{3n}{c} + 9$ points 
	such that each $q\in Q$ satisfies the conclusion of \cref{thm:split-triangle}: 
	the interior of each triangle $\triangle(p_{i-1}qp_{i+1})$ has at most $w_i$ points of $P$ 
	(for $i=0,1,2$ and indices modulo $3$).
	See \cref{fig:separating14}, right.

	Since each color class has at most $\lceil \frac{n}{c} \rceil\le \frac{n}{c}+1$ points, 
	$Q$ has points with at least four different colors.
	Let $q$ be a point of $Q$ with a color different than $p_0,p_1,p_2$. 
	We can now use $qp_0,qp_1,qp_2$ to split $CH(P)$, as it was done in the proof
	of \cref{thm:separatingpath}, and to select the piece $B$ with the largest
	number of points. As it happened there, $B$ has at least $\frac{n-4}{3}$ points
	by the pigeonhole principle, and 
	it has at most $w_0\le \left( \frac{1}{c}+\frac{2}{3}\right) n$ points by construction,
	which means that the other side has at least
	\[
		n-3-w_0 ~\ge~ n-3 - \left( \frac{1}{c}+\frac{2}{3}\right) n = \frac{(c-3)n}{3c}-3
	\]
	points.
	
	The algorithm to compute the separating path is very similar to the algorithm in \cref{thm:separatingpath}.	
\end{proof}

\begin{theorem}\label{thm:separatingpath_colors1}
	Assume that $c\ge 4$ and $n$ is sufficiently large.
	Let $P$ be a set of $n$ points in general position in the plane
	with a balanced $c$-coloring.
	Then $P$ has a polychromatic connected matching of size at least 
	$\frac{(c-3)n}{6c}-\frac{1}{2}$
	which can be computed in $O(n)$ time.
\end{theorem}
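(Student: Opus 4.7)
The plan is to invoke \cref{lem:separatingpath_colors1} to obtain a polychromatic $k$-separating path $\pi$ of length one or two, where $k:=\tfrac{(c-3)n}{3c}-3$. Let $A$ and $B$ denote the two sides of $CH(P)\setminus\pi$, with $|A|\ge|B|\ge k$. Every segment from a point of $A$ to a point of $B$ must cross $\pi$, so the idea is to build a large polychromatic matching between $A$ and $B$ and connect it through a single edge of $\pi$.

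The crucial intermediate step is to show that the maximum polychromatic matching $\mu^{*}(A,B)$ in the bipartite graph whose edges are differently-coloured pairs of $A\times B$ has size at least $k$. By the defect form of Hall's theorem,
\[
	\mu^{*}(A,B) ~=~ |B| - \max_{S\subseteq B}\bigl(|S|-|N(S)|\bigr),
\]
where $N(S)\subseteq A$ is the set of vertices with at least one admissible (i.e.\ differently coloured) neighbour in $S$. Since a point $a\in A$ fails to be in $N(S)$ exactly when it shares its colour with every $b\in S$, the maximiser $S$ must be monochromatic, and restricting to single colour classes gives
\[
	\mu^{*}(A,B) ~\ge~ \min\!\Bigl(|B|,\ |A|+|B|-\max_i(a_i+b_i)\Bigr),
\]
where $a_i,b_i$ denote the colour-$i$ counts in $A,B$. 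Using $a_i+b_i\le\lceil n/c\rceil$ and $|A|+|B|\ge n-3$, a short calculation (based on $\tfrac{c-1}{c}-\tfrac{c-3}{3c}=\tfrac{2}{3}$) shows the second argument exceeds $k$, so $\mu^{*}(A,B)\ge k$.

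Next I would compute such a maximum polychromatic matching $M$ and complete the argument according to the length of $\pi$. If $\pi$ is a single edge $e$, then every edge of $M$ crosses $e$, so $M\cup\{e\}$ is a polychromatic connected matching of size at least $k+1$, comfortably above the target for $n$ sufficiently large. If $\pi$ has length two with edges $e_1,e_2$ sharing the middle vertex, each edge of $M$ crosses exactly one of $e_1,e_2$, splitting $M=M_1\sqcup M_2$. The larger part satisfies $|M_i|\ge|M|/2\ge k/2$, and since $e_i$ is polychromatic (by the separating-path lemma) and is crossed by every edge of $M_i$, the set $M_i\cup\{e_i\}$ is a polychromatic connected matching of size at least $k/2+1=\tfrac{(c-3)n}{6c}-\tfrac{1}{2}$, matching the target exactly.

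The main obstacle is the linear-time algorithmic claim. \cref{lem:separatingpath_colors1} already delivers the separating path in $O(n)$, and sorting the edges of $M$ by which of $e_1,e_2$ they cross is trivial once $M$ is available. Everything hinges on being able to produce $M$ in $O(n)$ time. Here the complement of the relevant bipartite graph is a disjoint union of $c$ complete bipartite graphs $A_i\times B_i$, one per colour class; after bucketing all points by colour in linear time, one can realise a matching attaining the Hall bound by a direct pairing across buckets (for instance, pair each point with a point in a different bucket using a cyclic-offset rule keyed on the largest colour class). The factor-of-two loss incurred when $\pi$ has length two is precisely compensated by the coefficient $\tfrac{c-3}{6c}$ in the target being half of the separator guarantee $\tfrac{c-3}{3c}$, so the bound is tight for that branch of the argument.
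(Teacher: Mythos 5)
Your proposal is correct and follows essentially the same route as the paper: obtain the polychromatic $\left(\frac{(c-3)n}{3c}-3\right)$-separating path from \cref{lem:separatingpath_colors1}, build a polychromatic matching saturating the smaller side across the path, and pigeonhole over the at most two path edges to obtain the bound $\frac{(c-3)n}{6c}-\frac{1}{2}$. The only minor difference is that you justify the size-$k$ cross matching via Hall's defect formula plus a bucketed linear-time construction, whereas the paper uses a greedy pairing from the two most popular color classes; both give the same matching size and the same $O(n)$ running time.
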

\begin{proof}
	We use \cref{lem:separatingpath_colors1} to compute a polychromatic
	$\left(\frac{(c-3)n}{3c}-3\right)$-separating path $\pi$
	for $P$ of length $1$ or $2$.
	Let $A$ and $B$ be the sets on one side and the other side of $\pi$,
	and set $k=\min\{ |A|,|B|\} \ge \frac{(c-3)n}{3c}-3$. 
	We can compute a polychromatic matching $M$ of size $k$ between $A$ and $B$ greedily:
	at each step, we match a point from $A$ and a point of $B$ with different colors
	from the two most popular color classes; in this way, different color classes
	differ by at most one through the whole procedure, and all the points
	in the smallest side get matched.
	Since each edge of $M$ crosses $\pi$, at least one of the two (or fewer) edges of $\pi$, 
	say $e$, is intersected by $|M|/2$ edges of $M$. The edges of $M$ intersecting $e$
	together with $e$ form a polychromatic matching of size at least
	$1+\frac{k}{2}\ge 1+ \frac{(c-3)n}{6c}-\frac{3}{2} = \frac{(c-3)n}{6c}-\frac{1}{2}$.
	The computation in linear time is easy after obtaining the separating path 
	of \cref{lem:separatingpath_colors1}.
\end{proof}

\begin{lemma}\label{lem:separatingpath_colors2}
	For $c\ge 2$ and sufficiently large $n$, 
	there exists a polychromatic path $\pi$	with at most $3$ edges, 
	and two sets $P',P''\subset P$, each with at least $\frac{(c-1)n}{3c}-4$ points,
	such that each edge connecting a point from $P'$ to a point of $P''$ intersects $\pi$.
\end{lemma}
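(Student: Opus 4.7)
The plan is to follow closely the argument of \cref{lem:separatingpath_colors1}, keeping the same angular-sweep scaffolding but allowing one extra edge in the polychromatic path in the subcase where the candidate set produced by \cref{thm:split-triangle} cannot provide a single splitting point whose color matches the desired pattern. As before, pick $q_0\in P$ extremal with minimum $y$-coordinate, sort the remaining points $q_1,\dots,q_{n-1}$ in angular order around $q_0$, and set $k=\lceil(n-4)/3\rceil$.

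If there is an extremal $q_j$ with $k<j<n-k$, then the first half of the proof of \cref{lem:separatingpath_colors1} goes through verbatim for every $c\ge 2$: either $q_0q_j$ is already polychromatic, or a point $q_\ell$ of a third color exists in the middle range by pigeonhole (since each color class has at most $\lceil n/c\rceil\le n/2+1$ points). In either case we obtain a polychromatic path of length at most $2$ separating two pieces of $\ge n/4-O(1)\ge \frac{(c-1)n}{3c}-4$ points, which we take as $P'$ and $P''$.

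In the remaining triangle case, with $\triangle(q_0q_aq_b)$ containing $m$ interior points, I would apply \cref{thm:split-triangle} with the same template of weights used in \cref{lem:separatingpath_colors1} (choosing $w_0$ as a large enough fraction of $n$ so that the candidate set $Q$ contains points of sufficiently many colors, as guaranteed by pigeonhole). For $c\ge 4$ the original argument already produces a length-$2$ polychromatic path. For $c\in\{2,3\}$, the delicate case is when all of $q_0,q_a,q_b$ carry the same color; then I would pick two candidates $q,q'\in Q$ of two distinct colors different from that common color, and use the length-$3$ polychromatic path $\pi$ obtained by concatenating $p_{j_1}q$, $qq'$, and $q'p_{j_2}$ for a suitable pair $j_1,j_2\in\{0,1,2\}$. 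In every case the path, together with parts of the boundary of $CH(P)$, bounds a region whose points we take as $P'$, and we take $P''$ to be (a large subset of) the complementary points, so that every $P'$-to-$P''$ segment crosses $\pi$.

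The main obstacle will be the geometric and combinatorial bookkeeping for the length-$3$ case: one must verify that the chord $p_{j_1}qq'p_{j_2}$ actually separates the inside of $\triangle(q_0q_aq_b)$ into two pieces whose sizes are both controlled from below, and simultaneously choose the weights $w_0,w_1,w_2$ (and $Q$ accordingly) so that the resulting bounds reach $\frac{(c-1)n}{3c}-4$. Because $q$ and $q'$ each contribute their own triangle-splitting constraints, the count has to track the interaction of two sub-splits, so a careful case analysis on the coloring pattern of $(q_0,q_a,q_b)$ and on the relative positions of $q,q'$ inside the triangle will be needed.
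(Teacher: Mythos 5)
Your high-level plan (follow \cref{lem:separatingpath_colors1} and allow one extra edge when the colors do not cooperate) is indeed the paper's strategy, but your proposal omits the single quantitative idea that makes the stronger bound possible, and the case analysis around the extra edge is off. The paper's proof changes the weights: it sets $w_0=\lceil(\frac{1}{3c}+\frac{2}{3})n\rceil$ instead of $\lceil(\frac{1}{c}+\frac{2}{3})n\rceil$, precisely because the small side is bounded below by $(n-4)-w_0$, so reaching $\frac{(c-1)n}{3c}-4$ forces $w_0$ to be this small; the price is that the candidate set $Q$ from \cref{thm:split-triangle} now has only about $\frac{n}{c}+9$ points and is guaranteed to contain only \emph{two} colors, for every $c$. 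You go in the opposite direction (``choose $w_0$ large enough so that $Q$ contains sufficiently many colors'') and explicitly defer the choice of weights to the ``main obstacle'' --- but that choice is the crux, not bookkeeping. In particular, your claim that for $c\ge4$ ``the original argument already produces a length-$2$ polychromatic path'' does not prove the lemma: with the weights of \cref{lem:separatingpath_colors1} the small side is only guaranteed $\frac{(c-3)n}{3c}-3$ points, which falls short of the required $\frac{(c-1)n}{3c}-4$ by roughly $\frac{2n}{3c}$; and once the weights are reduced, the two-color guarantee means the extra-edge case can arise for every $c\ge 2$, not only $c\in\{2,3\}$.

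The delicate color case is also misidentified. If all of $q_0,q_a,q_b$ carry the same color, then one of the two colors guaranteed in $Q$ differs from it and a single splitting point suffices (moreover, for $c=2$ your prescription --- two candidates of two distinct colors, both different from the common color --- cannot even be satisfied). The genuinely hard case is when the two colors present in $Q$ both occur among $q_0,q_a,q_b$, so that no single candidate works. The paper handles it by taking $q_1,q_2\in Q$ of different colors, joining each of $p_0,p_1,p_2$ to whichever of $q_1,q_2$ has a different color, and adding the edge $q_1q_2$; each of the three resulting regions is bounded by a polychromatic path of at most $3$ edges, still contains at most $w_i$ points by the same rotating-ray count, and the pigeonhole step goes through with $n-4$ in place of $n-3$. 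Your single path $p_{j_1}qq'p_{j_2}$ comes with no argument for why the weights control both sides of it, and you leave exactly that verification open, so the proof is incomplete where it matters. (A smaller point: in the extremal-vertex case your asserted inequality $\frac{n}{4}-O(1)\ge\frac{(c-1)n}{3c}-4$ is false for $c\ge 5$ and large $n$; one must pick the middle point $q_\ell$ in a narrower range, using that for $c\ge 3$ each color class is much smaller than $n/2$. The paper is terse about this step too, but your write-up states the false inequality explicitly.)
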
	
\begin{proof}
	The path $\pi$ we are searching for is essentially a 
	polychromatic $k$-separating path of length at most $3$, 
	but now the path may self-intersect and the regions are not obvious.
	
	We closely follow the proof of \cref{lem:separatingpath_colors1}.
	The only difference is that we set
	\[
		w_0= \left\lceil\left( \frac{1}{3c}+\frac{2}{3}\right) n\right\rceil, 
		~~~ w_1= w_0 - (n-b-1), ~\text{ and }
		w_2= w_0 - (a-1).
	\]
	(In the proof of \cref{lem:separatingpath_colors1} we had $\frac{1}{c}$ instead of $\frac{1}{3c}$.)
	Like before, we note that
	\begin{align*}
		w_0 ~&>~ 0,\\
		w_1 ~&=~ w_0 - (n-b-1) \ge \left( \frac{1}{3c}+\frac{2}{3}\right)n - k +1
				> \frac{2n}{3} - \frac{n-2}{3} +1 > 0,\\
		w_2 ~&=~ w_0 - (a-1) \ge \left( \frac{1}{3c}+\frac{2}{3}\right)n - k +1
				> 0,
	\end{align*}
	and then we note that
	\begin{align*}
		w_0 + w_1 + w_2 ~&\ge ~ 3 \left( \frac{1}{3c}+\frac{2}{3}\right) n - (n-b-1) - (a -1)\\
			&=~ 2n + \frac{n}{c}- n + b -a +2 \\
			&\ge~ \frac{n}{c} + n + b-a +2\\
			&\ge~ \frac{n}{c} + (m+3) + (m+1) + 2, \ \text{~~using $m=b-a-1$ and $n-3 \ge m$}\\
			&\ge~ \frac{n}{c} + 2m + 6.
	\end{align*}
	This means that, using \cref{thm:split-triangle} we get a set $Q\subset P$
	of at least $w_0 + w_1 + w_2 - 2m+3 \ge  \frac{n}{c} + 9$ points 
	such that each $q\in Q$ satisfies the conclusion of \cref{thm:split-triangle}: 
	the interior of each triangle $\triangle(p_{i-1}qp_{i+1})$ has at most $w_i$ points of $P$ 
	(for $i=0,1,2$ and indices modulo $3$).
	Recall \cref{fig:separating14}, right.

	Since each color class has at most $\lceil \frac{n}{c} \rceil\le \frac{n}{c}+1$ points, 
	$Q$ has points with at least \emph{two different colors}.
	Let $q_1,q_2$ be points of $Q$ with different colors.
	If one of them has a color different than the three points $p_0,p_1,p_2$,
	we can continue as usual.
	Otherwise, we connect each point $p_i$ ($i=1,2,3$) with a point $q_j$ ($j=1$ or $j=2$)
	that has a different color. We also connect $q_1q_2$. See \cref{fig:separating15}.
	These four edges define $3$ regions; they may overlap because two (and only two)
	of the edges may cross. Nevertheless, the same argument as
	shown before can be used to show that each of the regions defined by 
	the edge $p_{i-1}p_{i+1}$ contains at most $w_i$ points.
	Now a region is bounded by a $3$-edge path, which is defined by $4$ points.

	\begin{figure}[htb]\centering
		\includegraphics[page=15,width=\textwidth]{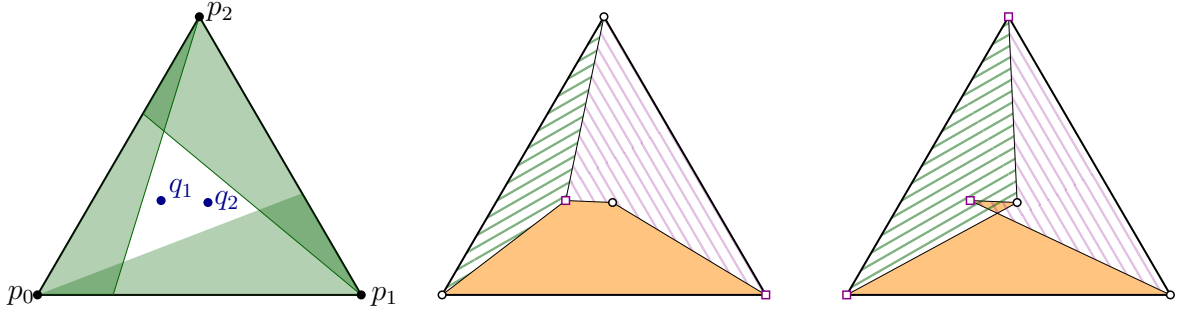}
		\caption{Proof of \cref{lem:separatingpath_colors2}. 
			Left: Two points $q_1,q_2$ from $Q$ with different colors. 
			Center and right: two possible configurations and the regions they define.
			In the right, all three regions share a triangle.}
		\label{fig:separating15}
	\end{figure}
	
	We can now use these regions to cover $CH(P)$ with three pieces, 
	possibly with an overlap, and to select the piece $B$ with the largest
	number of points. Let $\pi$ be the path that together with a portion of 
	the boundary of $CH(P)$ defines $B$.
	As it happened in the proof of \cref{lem:separatingpath_colors1}, 
	$B$ has at least $\frac{n-5}{3}$ points	(instead of $\frac{n-4}{3}$)
	by the pigeonhole principle, and 
	it has at most $w_0\le \left( \frac{1}{3c}+\frac{2}{3}\right) n$ points by construction,
	which means that the complement has at least
	\[
		(n-4)-w_0 ~\ge~ n-4 - \left( \frac{1}{3c}+\frac{2}{3}\right) n = \frac{(c-1)n}{3c}-4
	\]
	points. (Now we have $n-4$, instead of $n-3$, because a bounding path has $4$ points instead of $3$.)
	We take $P'$ to be the points inside $B$ and $P''$ the points outside $B$ and not on $\pi$.
	Since $\pi$ connects two points on the boundary of $CH(P)$, each edge connecting a point
	from $P'$ to a point of $P''$ crosses $\pi$.
\end{proof}

\begin{theorem}\label{thm:separatingpath_colors2}
	Assume that $c\ge 2$ and $n$ is sufficiently large.
	Let $P$ be a set of $n$ points in general position in the plane
	with a balanced $c$-coloring.
	Then $P$ has a polychromatic connected matching of size at least 
	$\frac{(c-1)n}{9c}-\frac{1}{3}$
	which can be computed in $O(n)$ time.
\end{theorem}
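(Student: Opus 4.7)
The plan is to mirror the proof of \cref{thm:separatingpath_colors1} almost verbatim, replacing the use of \cref{lem:separatingpath_colors1} by \cref{lem:separatingpath_colors2}. Concretely, I would first invoke \cref{lem:separatingpath_colors2} to obtain a polychromatic path $\pi$ with at most $3$ edges together with two sets $P',P''\subset P$ of size at least $k:=\frac{(c-1)n}{3c}-4$ each, having the property that every segment from $P'$ to $P''$ crosses $\pi$.

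Next, exactly as in the proof of \cref{thm:separatingpath_colors1}, I would greedily build a polychromatic matching $M$ of size $k$ between $P'$ and $P''$. At each step, pick an unmatched point from $P'$ and an unmatched point from $P''$ whose colors come from the two currently largest remaining color classes; a standard pigeonhole argument (balanced coloring, classes differ by at most one throughout) guarantees we can always avoid a color clash, so the procedure matches all the $k$ points on the smaller side.

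Since every edge of $M$ crosses $\pi$ and $\pi$ has at most $3$ edges, by pigeonhole one edge $e$ of $\pi$ is crossed by at least $|M|/3\ge k/3$ edges of $M$. Those edges together with $e$ form a connected polychromatic matching (all its edges meet $e$) of size at least
\[
    1+\frac{k}{3} ~=~ 1+\frac{1}{3}\left(\frac{(c-1)n}{3c}-4\right) ~=~ \frac{(c-1)n}{9c}-\frac{1}{3},
\]
which is the claimed bound. The fact that $e$ is itself polychromatic (because $\pi$ is polychromatic) is what makes the added edge legal.

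For the running time, \cref{lem:separatingpath_colors2} is stated to be computable in linear time (the proof parallels the linear-time construction of \cref{lem:separatingpath_colors1}), and the greedy polychromatic matching, the selection of the most crossed edge of $\pi$, and the final extraction of the sub-matching are all trivially linear with appropriate bookkeeping of color-class sizes. I do not foresee a real obstacle here: the only mild subtlety is the standard verification that the greedy colored matching does not get stuck, which is handled by the balanced-coloring invariant used throughout \cref{sec:colored}.
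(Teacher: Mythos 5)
Your proposal is correct and follows essentially the same route as the paper's own proof: invoke \cref{lem:separatingpath_colors2}, greedily build a polychromatic matching of size $k\ge \frac{(c-1)n}{3c}-4$ between $P'$ and $P''$, and use the pigeonhole principle over the at most three edges of $\pi$ to extract a connected sub-matching of size $1+\frac{k}{3}\ge \frac{(c-1)n}{9c}-\frac{1}{3}$. The algorithmic remark and the observation that the selected edge of $\pi$ is itself polychromatic also match the paper's treatment.
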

\begin{proof}
	The proof is very similar to the proof of \cref{thm:separatingpath_colors1},
	but we use \cref{lem:separatingpath_colors2}.
	We start using \cref{lem:separatingpath_colors2} to obtain the path $\pi$ and 
	the point sets $P'$ and $P''$ claimed there.
	Set $k=\min\{ |P'|,|P''|\} \ge \frac{(c-1)n}{3c}-4$. 
	We construct a polychromatic matching $M$ of size $k$ between $P'$ and $P''$ greedily,
	as discussed in the proof of \cref{thm:separatingpath_colors1}.
	Since each edge of $M$ crosses $\pi$, at least one of the three (or fewer) 
	edges of $\pi$, say $e$, is intersected by $|M|/3$ edges of $M$. 
	The edges of $M$ intersecting $e$
	together with $e$ form a polychromatic matching of size at least
	$1+\frac{k}{3}\ge 1+ \frac{(c-1)n}{9c}-\frac{4}{3} = \frac{(c-1)n}{9c}-\frac{1}{3}$.
\end{proof}

Finally, we compare the bounds of \cref{thm:separatingpath_colors1} 
and \cref{thm:separatingpath_colors2}. For this,
we want to know for which $c$ we have
\[
	\frac{(c-3)n}{6c}-\frac{1}{2} \ge \frac{(c-1)n}{9c}-\frac{1}{3}.
\]
The first bound is better for $c >7$.
(For $c=7$ we get $\frac{2n}{21}-\frac{1}{2}$ against $\frac{2n}{21}-\frac{1}{3}$).

\section{Discussion and Future Work}
\label{sec:discussion}	

We have studied the problem of finding a largest connected matching defined
by a set of points in the plane. Our upper and lower bounds do no match,
and the most obvious open problem is closing the gap.
 
The problem of crossing families asks for
finding a matching in the intersection graph of segments defined
by a set of points. In our problem we were only concerned about connectivity.
A problem in between is the following: 
\begin{question}
	Consider matchings such that the resulting intersection graph is $k$-connected.
\end{question}
For $k=\Theta(n)$, the problem approaches the problem of crossing families.
We can also search for matchings whose intersection graph has additional substructures,
such as containing a largest star, a Hamiltonian path or a Hamiltonian cycle.
Finally, one can consider the algorithmic problem of finding a largest connected
matching (or related structures) for a given point set.

\paragraph*{Acknowledgments.}
Research on this work has been initiated at the 17th European Geometric Graph Week which was held from August 15th to 19th in Leipzig. We thank all participants for the good atmosphere as well as for inspiring discussions on the topic. Moreover, we thank Florian Thomas for pointing us to a variant of Question~\ref{que:main} which initiated our research.

O.~A.~supported by FWF grant~W1230. 
J.~S.~supported by the grant SVV–2023–260699 and by the project 23-04949X of the Czech Science Foundation (GA\v{C}R).

Research funded in part by the Slovenian Research and Innovation Agency (P1-0297, J1-2452, N1-0218, N1-0285).
Research funded in part by the European Union (ERC, KARST, project number 101071836). Views and opinions expressed are however those of the authors only and do not necessarily reflect those of the European Union or the European Research Council. Neither the European Union nor the granting authority can be held responsible for them.

\bibliography{references}

\begin{thebibliography}{10}

\bibitem{AbregoF17}
Bernardo~M. {\'{A}}brego and Silvia Fern{\'{a}}ndez{-}Merchant.
\newblock The rectilinear local crossing number of {$K_n$}.
\newblock {\em J. Comb. Theory, Ser. {A}}, 151:131--145, 2017.
\newblock See \url{https://arxiv.org/abs/1508.07926v4} for a corrected bound in
  Lemma~2.
\newblock \href {https://doi.org/10.1016/j.jcta.2017.04.003}
  {\path{doi:10.1016/j.jcta.2017.04.003}}.

\bibitem{AichholzerKSVV22}
Oswin Aichholzer, Jan Kyncl, Manfred Scheucher, Birgit Vogtenhuber, and Pavel
  Valtr.
\newblock On crossing-families in planar point sets.
\newblock {\em Comput. Geom.}, 107:101899, 2022.
\newblock \href {https://doi.org/10.1016/j.comgeo.2022.101899}
  {\path{doi:10.1016/j.comgeo.2022.101899}}.

\bibitem{AronovEGKKPS94}
Boris Aronov, Paul Erd{\"{o}}s, Wayne Goddard, Daniel~J. Kleitman, Michael
  Klugerman, J{\'{a}}nos Pach, and Leonard~J. Schulman.
\newblock Crossing families.
\newblock {\em Comb.}, 14(2):127--134, 1994.
\newblock \href {https://doi.org/10.1007/BF01215345}
  {\path{doi:10.1007/BF01215345}}.

\bibitem{BlumFPRT73}
Manuel Blum, Robert~W. Floyd, Vaughan~R. Pratt, Ronald~L. Rivest, and
  Robert~Endre Tarjan.
\newblock Time bounds for selection.
\newblock {\em J. Comput. Syst. Sci.}, 7(4):448--461, 1973.
\newblock \href {https://doi.org/10.1016/S0022-0000(73)80033-9}
  {\path{doi:10.1016/S0022-0000(73)80033-9}}.

\bibitem{CabelloM18}
Sergio Cabello and Lazar Milinkovi{\'c}.
\newblock Two optimization problems for unit disks.
\newblock {\em Comput. Geom.}, 70-71:1--12, 2018.
\newblock \href {https://doi.org/10.1016/j.comgeo.2017.12.001}
  {\path{doi:10.1016/j.comgeo.2017.12.001}}.

\bibitem{CLRS_book}
Thomas~H. Cormen, Charles~E. Leiserson, Ronald~L. Rivest, and Clifford Stein.
\newblock {\em Introduction to Algorithms, 3rd Edition}.
\newblock {MIT} Press, 2009.

\bibitem{BergCKO08}
Mark {de Berg}, Otfried Cheong, Marc~J. {van Kreveld}, and Mark~H. Overmars.
\newblock {\em Computational Geometry: Algorithms and Applications, 3rd
  Edition}.
\newblock Springer, 2008.

\bibitem{Matousek2002}
Ji{\v r}{\'{\i}} Matou{\v s}ek.
\newblock {\em Lectures on discrete geometry}, volume 212 of {\em Graduate
  texts in mathematics}.
\newblock Springer, 2002.
\newblock \href {https://doi.org/10.1007/978-1-4613-0039-7}
  {\path{doi:10.1007/978-1-4613-0039-7}}.

\bibitem{Megiddo83a}
Nimrod Megiddo.
\newblock Linear-time algorithms for linear programming in {$R^3$} and related
  problems.
\newblock {\em {SIAM} J. Comput.}, 12(4):759--776, 1983.
\newblock \href {https://doi.org/10.1137/0212052} {\path{doi:10.1137/0212052}}.

\bibitem{PachRT21}
J\'{a}nos Pach, Natan Rubin, and G\'{a}bor Tardos.
\newblock Planar point sets determine many pairwise crossing segments.
\newblock {\em Adv. Math.}, 386:Paper No. 107779, 21, 2021.
\newblock \href {https://doi.org/10.1016/j.aim.2021.107779}
  {\path{doi:10.1016/j.aim.2021.107779}}.

\end{thebibliography}
\end{document}